\documentclass[11pt,a4paper]{article}
\usepackage[margin=1in]{geometry}
\usepackage{amsmath}
\usepackage{amssymb}
\usepackage{amsthm}
\usepackage{mathrsfs}
\usepackage{amsbsy}
\usepackage{upgreek}
\usepackage{enumerate}
\usepackage{todonotes}
\usepackage[normalem]{ulem}%

\newtheorem{definition}{Definition}
\newtheorem{theorem}{Theorem}
\newtheorem{lemma}{Lemma}
\newtheorem{problem}{Problem}

\DeclareMathOperator{\opt}{OPT}
\DeclareMathOperator{\alg}{ALG}

\newcommand{\Sep}{\ensuremath\mathcal{X}}
\newcommand{\IR}{\mathbb{R}}
\newcommand{\mb}[1]{\mathbf{#1}}

\usepackage{authblk}
\usepackage[unicode=true]{hyperref}
\newcommand\rurl[1]{\href{http://#1}{\nolinkurl{#1}}}

\title{Approximation Schemes for Geometric Coverage Problems}
\author[1]{Steven Chaplick}
\author[2]{Minati De}
\author[3]{Alexander Ravsky}
\author[1]{Joachim Spoerhase}

\affil[1]{Lehrstuhl f\"ur Informatik~I, Universit\"at W\"urzburg, Germany,
  \rurl{www1.informatik.uni-wuerzburg.de/en/staff},
  \nolinkurl{firstname.lastname@uni-wuerzburg.de}.}
\affil[2]{Department of Computer Science and Automation,
Indian Institute of Science, Bangalore, India
\nolinkurl{minati@csa.iisc.ernet.in}.}
\affil[3]{Pidstryhach Institute for Applied Problems of Mechanics and Mathematics, National Academy of Science of Ukraine, Lviv, Ukraine,
  \nolinkurl{oravsky@mail.ru}.}

\date{\today}

\begin{document}
\maketitle

\begin{abstract}
  In their seminal work,
  Mustafa and Ray~\cite{MustafaR09} showed that a wide class of
  geometric \emph{set cover (SC)} problems admit a PTAS via local
  search -- this is one of the most general approaches known for
  such problems. Their result applies if a naturally defined
  ``exchange graph'' for two feasible solutions is planar and is based
  on subdividing this graph via a planar separator theorem due to
  Frederickson~\cite{Frederickson87}. Obtaining similar results for
  the related \emph{maximum $k$-coverage problem (MC)} seems
  non-trivial due to the hard cardinality constraint. In fact, while
  Badanidiyuru, Kleinberg, and Lee~\cite{badanidiyuru12-geometric-mc}
  have shown(via a different analysis) that local search yields a PTAS for
  two-dimensional real halfspaces, they only conjectured that the same
  holds true for dimension three. Interestingly, at this point
  it was already known that local search provides a PTAS for the
  corresponding set cover case and this followed directly from
  the approach of Mustafa and Ray.

  In this work we provide a way to address the above-mentioned
  issue. First, we propose a \emph{color-balanced} version of the
  planar separator theorem. The resulting subdivision approximates
  locally in each part the global distribution of the colors. Second,
  we show how this roughly balanced subdivision can be employed in a
  more careful analysis to strictly obey the hard cardinality
  constraint. More specifically, we obtain a PTAS for any
  ``planarizable'' instance of MC and thus essentially for all cases
  where the corresponding SC instance can be tackled via the approach
  of Mustafa and Ray. As a corollary, we confirm the conjecture of
  Badanidiyuru, Kleinberg, and Lee~\cite{badanidiyuru12-geometric-mc}
  regarding real half spaces in dimension three. We feel that our
  ideas could also be helpful in other geometric settings involving a
  cardinality constraint.
\end{abstract}

\section{Introduction}

The Maximum Coverage (MC) problem is one of the classic combinatorial optimization problems which is well studied  due to its wealth of applications.
 Let $U$ be a set of ground elements, $\mathcal{F}\subseteq 2^{U}$ be a family of subsets of $U$ and $k$ be a positive integer.  The \textsc{Maximum Coverage} (MC) problem asks for a $k$-subset $\mathcal{F}'$ of $\mathcal{F}$ such that the number $|\bigcup\mathcal{F}'|$ of ground elements covered by $\mathcal{F}'$ is maximized.

Many real life problems arising from banking~\cite{CornuejolsNW80}, social networks,
transportation network~\cite{Mecke}, databases~\cite{HarinarayanRU96}, information
retrieval, sensor placement, security (and others) can be framed as an instance of
MC problem. For example, the following are easily seen as MC problems: placing $k$
sensors to maximize the number of covered customers, finding a set of $k$ documents
satisfying the information needs of as many users as possible~\cite{badanidiyuru12-geometric-mc},
and placing $k$ security personnel in a terrain to maximize the number of secured regions is secured.

  From the result of Cornu{\'{e}}jols~\cite{CornuejolsNW80}, it is
  well known that greedy algorithm is a $1-1/e$
  approximation algorithm for the MC problem.  Due to wide applicability of the
  problem, whether one can achieve an approximation factor better than
  $(1-\frac{1}{e})$ was subject of research for a long period of time.
  From the result of Feige~\cite{Feige98}, it is known that if there
  exists a polynomial-time algorithm that approximates maximum
  coverage within a ratio of $(1 - \frac{1}{e} + \epsilon)$ for some
  $\epsilon > 0$ then P = NP.  Better results can however be
  obtained for special cases of MC.  For example, Ageev and Sviridenko
  \cite{ageevS04-pipage-rounding} show in their seminal work that
  their pipage rounding approach gives a factor $1-(1-1/r)^r$ for
  instances of MC where every element occurs in at most $r$ sets.  For
  constant $r$ this is a strict improvement on $1-1/e$ but this bound
  is approached if $r$ is unbounded.  For example, pipage rounding
  gives a $3/4$-approximation algorithm for \textsc{Maximum Vertex
    Cover} (MVC), which asks for a $k$-subset of nodes of a given
  graph that maximizes the number of edges incident on at least one of
  the selected nodes.  Petrank~\cite{Patrank94} showed that this
  special case of MC is APX-hard.

In this paper, we study the approximability of MC in \emph{geometric} settings where elements and sets are represented by geometric objects.  Such problems have been considered before and have applications, for example, in information retrieval~\cite{badanidiyuru12-geometric-mc} and in wireless networks~\cite{Erlebach08-geometric-coverage}.

MC is related to the \textsc{Set Cover} problem (SC). For a given set  $U$  of ground elements and  a family   $\mathcal{F}\subseteq 2^{U}$ of subsets of $U$, this problem asks for a minimum cardinality subset of $\mathcal{F}$ which covers all the ground elements of $U$. This problem plays a central role in combinatorial optimization and in particular in the study of approximation algorithms. The best known approximation algorithm has a ratio of $\ln n$, which is essentially the best possible \cite{Feige98} under a plausibly complexity-theoretic assumption. A lot of work has been devoted to beat the logarithmic barrier in the context of geometric set cover problems\cite{BronnimannG95-geometric-setcover,Varada,Chan,Ray}.  Mustafa and Ray~\cite{MustafaR09} introduced a powerful tool which can be used to show that a \emph{local search} approach provides a PTAS for various geometric SC problems.  Their result applies if a naturally defined ``exchange graph'' (whose nodes are the sets in two feasible solutions) is planar and is based on subdividing this graph via a planar separator theorem due to Frederickson~\cite{Frederickson87}.
In the same paper~\cite{MustafaR09}, they applied this approach to provide a PTAS for the SC problem when the family $\mathcal{F}$ consists of either a set of half spaces in $\IR^3$, or a set of disks in $\IR^2$.
Many results have been obtained using this technique for different problems in geometric settings~\cite{ChanH12,DeL16,GibsonP10,KrohnGKV14}. Some of these works extend to cases where the underlying exchange graph is not planar but admits a small-size separator~\cite{AschnerKMY13,GovindarajanRRR16,Har-PeledQ15}.

Beyond the context of SC, local search has also turned out to be a very
powerful tool  for other geometric problems but the analysis of
such algorithms is usually non-trivial and highly tailored to the
specific setting. Examples are the Euclidean TSP, Euclidean Steiner
tree, facility location, $k$-median~\cite{Cohen-AddadM15}. In some very
recent breakthroughs,  PTASs for $k$-means problem in finite
Euclidean dimension (and more general cases) via local search have
been announced
\cite{cohen-addad-local-search-k-means,friggstad-etal-local-search-k-means}.

In this paper, we study the effectiveness of local search for geometric MC problems.  In the general case, $b$-swap local search is known to yield a tight approximation ratio of $1/2$ \cite{KerkkampA16-localsearch-maxcoverage}.  However, for special cases such as geometric MC problems local search is a promising candidate for beating the barrier $1-1/e$. It seems, however, non-trivial to obtain such results using the technique of Mustafa and Ray~\cite{MustafaR09}. In their analysis, each part of the subdivided planar exchange graph (see above) corresponds to a feasible candidate swap that replaces some sets of the local optimum with some sets of the global optimum and it is ensured that every element stays covered due to the construction of the exchange graph.  It is moreover argued that if the global optimum is sufficiently smaller than the local optimum then one of the considered candidate swaps would actually reduce the size of the solution.

It is possible to construct the same exchange graphs also for the case
of MC. However, the hard cardinality constraint given by input
parameter $k$ poses an obstacle. In particular, when considering a
swap corresponding to a part of the subdivision, this swap might
be infeasible as it may contain (substantially) more sets from the
global optimum than from the local optimum. Another issue is that MC
has a different objective function than SC. Namely, the goal is to maximize the
number of covered elements rather than minimizing the number of used
sets.  Finally, while for SC all elements are covered by both
solutions, in MC we additionally have elements that are covered by
none or only one of the two solutions requiring a more detailed distinction of several types of elements.

In fact, subsequent to the work of Mustafa and Ray on
SC~\cite{MustafaR09}, Badanidiyuru, Kleinberg, and
Lee~\cite{badanidiyuru12-geometric-mc} studied geometric MC.
They obtained fixed-parameter approximation schemes for MC instances for
the very general case where the family $\mathcal{F}$ consists of
objects with bounded VC dimension, but the running times are
exponential in the cardinality bound $k$. They further provided APX-hardness for each of the
following cases: set systems of VC-dimension 2, halfspaces in $\IR^4$,
and axis-parallel rectangles in $\IR^2$.
Interestingly, while they have shown that for MC instances where
$\mathcal{F}$ consists of halfspaces in $\IR^2$ local search can be
used to provide a PTAS, they only conjecture that local search will provide
a PTAS for when $\mathcal{F}$ consists of half spaces in $\IR^3$.
This underlines the observation that it seems non-trivial to apply the approach of Mustafa and Ray to geometric MC
problems as at that point a PTAS for halfspaces in $\IR^3$ for SC was already
known via the approach of Mustafa and Ray.

The difficulty of analyzing local search under the presence of a cardinality constraint is also known in other settings. For example, one of the main technical contributions of the recent breakthrough for the Euclidean $k$-means problem \cite{cohen-addad-local-search-k-means,friggstad-etal-local-search-k-means} is that the authors are able to handle the hard cardinality constraint by the concept of so-called isolated pairs \cite{cohen-addad-local-search-k-means}.  Prior to these works approximation schemes have only been known for bicriteria variants where the cardinality constraint may be violated or where there is no constraint but---analogously to SC---the cardinality contributes to the objective function~\cite{BandyapadhyayV16-kmeans-clustering}.

\subsection{Our Contribution}

In this paper, we show a way how to cope with the above-mentioned
issue with a cardinality constraint.  We are able to achieve a PTAS
for many geometric MC problems. At a high level we follow the
framework of Mustafa and Ray defining a planar (or more generally
$f$-separable) exchange graph and subdividing it into a number of
small parts each of them corresponding to a candidate swap. As each
part may be (substantially) imbalanced in terms of the number of sets
of the global optimum and local optimum, respectively, a natural idea
seems to swap in only a sufficiently small subset of the globally
optimal sets.  This idea alone is, however, not sufficient.  Consider,
for example, the case where each part contains either only sets from
the local or only sets from the global optimum making it impossible to
retrieve any feasible swap from the considering the single parts. To
overcome this difficulty, we prove in a first step a
\emph{color-balanced} version of the planar separator theorem
(Theorem~\ref{thm:2color-uniform}). In this theorem, the input is a
planar (or more generally \emph{$f$-separable}) graph whose nodes are
two-colored arbitrarily. The distinctions of our separator theorem
from the prior work, are that our separator theorem guarantees that
all parts have roughly the same size (rather than simply an upper
limit on their size) and that the two colors are represented in each
part in roughly the same ratio as in the whole graph.  This balancing
property allows us to address the issue of the above-mentioned
infeasible swaps. In a second step, we are able to employ the only
roughly color-balanced subdivision to establish a set of perfectly
balanced candidate swaps. We prove by a careful analysis (which turns
out more intricate than for the SC case) that local search also yields
a PTAS for the wide class \emph{$f$-separable} MC problems (see
Theorem~\ref{thm:main-thm}). As an immediate consequence, we obtain
PTASs for essentially all cases of geometric MC problems where the
corresponding SC problem can be tackled via the approach of Mustafa
and Ray (Theorem~\ref{thm:Application}). In particular, this confirms
the conjecture of Badanidiyuru, Kleinberg, and
Lee~\cite{badanidiyuru12-geometric-mc} regarding halfspaces in
$\IR^3$.  We also immediately obtain PTASs for \textsc{Maximum
  Dominating Set} and \textsc{Maximum Vertex Cover} on $f$-separable
and minor-closed graph classes (see section~\ref{sec:Application}),
which, to the best of our knowledge, were not known before.  We feel
that our approach has the potential to find further applications in
similar cardinality constrained settings.

\section{Color Balanced Divisions}
\label{sec:tools}

In this section we provide the main tool used to prove our main result (i.e.,
Theorem~\ref{thm:main-thm}).
We first describe a new subtle specialization (see
Lemma~\ref{lem:uniform-(r,f)-division}) of the standard division theorem
on $f$-separable graph classes (see Theorem~\ref{thm:(r,f)-division}).
This builds on the concept of $(r,f(r))$-divisions (in the sense of
Henzinger et al.~\cite{HenzingerKRS97}) of graphs in an $f$-separable graph
class. We then extend this specialized division lemma by suitably aggregating
the pieces of the partition to obtain a \emph{two-color balanced} version (see
Theorem~\ref{thm:2color-uniform}).
This result generalizes to more than two colors.
However, as our applications stem from the two-colored version, we defer the
generalization to the appendix (see Appendix~\ref{app:separators}).
For a number $n$, we use $[n]$ to denote the set $\{1,\dots,n\}$.

  For a graph $G$, a subset $S$ of $V(G)$ is an \emph{$\alpha$-balanced separator}
when its removal breaks $G$ into two collections of connected components
such that each collection contains at most an $\alpha$ fraction of
$V(G)$ where $\alpha \in [\frac{1}{2},1)$ and $\alpha$ is a constant.
The size of a separator $S$ is simply the number of vertices it contains.
For a non-decreasing sublinear function $f$, a class of graphs that is closed under taking subgraphs is said to be \emph{$f$-separable}
 if there is an $\alpha \in [\frac{1}{2},1)$ such that for any $n>2$, an $n$-vertex graph in the class has a $\alpha$-balanced separator whose size is at most $f(n)$.
Note that, by the Lipton-Tarjan separator theorem~\cite{lipton1979separator},
planar graphs are a subclass of the $\sqrt{n}$-separable graphs. More generally,
Alon, Seymour, and Thomas~\cite{alonST1990minorclosed} have shown that every
graph class characterized by a finite set of forbidden minors is also a subclass
of the $(c\cdot\sqrt{n})$-separable graphs (here, the constant $c$ depends on the size
of the largest forbidden minor). In particular, from the graph minors
theorem~\cite{robertson2004graph}, every non-trivial minor closed graph class is
a subclass of the $(c\cdot\sqrt{n})$-separable graphs (for some constant $c$).
Note that when we discuss $f$-separable graph classes we assume the function $f$ has the
form $f(x) = x^{1-\delta}$ for some $\delta >0$, i.e., it is both non-decreasing and
strongly sublinear.

Frederickson~\cite{Frederickson87} introduced the notion of an
\emph{$r$-division} of an $n$-vertex graph $G$, namely, a cover
of $V(G)$ by $\Theta(\frac{n}{r})$ sets each of size $O(r)$ where each set
has $O(\sqrt{r})$ \emph{boundary} vertices, i.e., $O(\sqrt{r})$ vertices in
common with the other sets. Frederickson
showed that, for any $r$, every planar graph $G$ has an $r$-division and
that one can be computed in $O(n \log n)$ time. This result follows from
a recursive application of the Lipton-Tarjan planar separator
theorem~\cite{lipton1979separator}. This notion was further generalized by
Henzinger et al.~\cite{HenzingerKRS97} to $(r,f(r))$-divisions\footnote{They
use a more general notion of $(r,s)$-division but we need the restricted
version as described here.} where
$f$ is a function in $o(r)$ and each set has at most $f(r)$ vertices in
common with the other sets. They noted that Frederickson's proof can
easily be adapted to obtain an $(r,c \cdot f(r))$-division of any graph $G$
from a subgraph closed $f$-separable graph class -- as formalized in
Theorem~\ref{thm:(r,f)-division}).
Note that we use an equivalent but slightly different notation than
Frederickson and Henzinger et al. in that we consider the ``boundary'' vertices
as a single separate set apart from the non-boundary vertices in each ``region'',
i.e., our \emph{divisions} are actually partitions of the vertex set.
This allows us to carefully describe the number of vertices inside each ``region''.

\begin{theorem}[\cite{Frederickson87,HenzingerKRS97}]
\label{thm:(r,f)-division}
For any subgraph closed $f$-separable
class of graphs $\mathcal{G}$, there are constants $c_1,c_2$ such that every
graph $G$ in the class has an \emph{$(r, c_1 \cdot f(r))$-division} for any $r$.
Namely, for any $r\geq 1$, there is an integer $t \in \Theta(\frac{n}{r})$
such that $V$ can be partitioned into $t+1$ sets $\Sep, V_1, \ldots, V_t$
where the following properties hold.
\begin{enumerate}[(i)]
  \item $N(V_i) \cap V_j= \emptyset$ for each $i\neq j$,
  \item $|V_i \cup N(V_i)| \leq r$ for each $i$,
  \item $|N(V_i) \cap \Sep| \leq c_1 \cdot f(r)$ for each $i$
  (thus, $|\Sep| \leq \sum_{i=1}^t |\Sep \cap N(V_i)| \leq
         c_2 \cdot \frac{f(r) \cdot n}{r}$).
  \end{enumerate}
Moreover, such a partition can be found in $O(g(n) \log n)$ time where
$g(n)$ is the time required to find an $f$-separation in $\mathcal{G}$.
\end{theorem}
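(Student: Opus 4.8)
The plan is to follow Frederickson's two–phase construction of an $r$-division, in the form that Henzinger et al.\ extended to $f$-separable classes, and merely recast the output so that the boundary vertices are gathered into the single set $\Sep$ (turning the ``division'' into an honest partition) while keeping careful track of region sizes. Throughout we use that the class is closed under taking subgraphs, so the $\alpha$-balanced separator theorem applies to every induced subgraph that arises.

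\emph{Phase 1 (recursive separation).} Maintain a family of pairwise disjoint ``regions'', initially $\{V\}$, together with a set $\Sep$, initially empty. While some region $R$ has $|R|>r$, apply the separator theorem to $G[R]$ to get $S$ with $|S|\le f(|R|)$ and collections $A,B$ of components of $G[R]-S$ with $|A|,|B|\le\alpha|R|$; replace $R$ by the two regions $A,B$ and move $S$ into $\Sep$. Since $f$ is strongly sublinear, $|S|\le f(|R|)=o(|R|)$, so once $|R|$ exceeds a suitable constant we have $|A|,|B|\ge(1-\alpha)|R|-f(|R|)\ge\beta|R|$ for a constant $\beta\in(0,\frac12)$; hence every split produces two regions whose sizes are a constant fraction of the parent's, the recursion tree has depth $O(\log(n/r))$, and---as the final regions are disjoint with sizes in $(\beta r,r]$---there are $t=O(n/r)$ of them. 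Put $V_i:=R_i$ for the final regions, padding with empty sets to reach some $t\in\Theta(n/r)$ (the lower bound follows once Phase~2 shows the $V_i$ cover all but $O(f(r)n/r)$ vertices, each region holding at most $r$ of them). Property~(i) is then immediate: if $u\in V_i$, $v\in V_j$ with $i\ne j$, then at the split where the branches of $R_i$ and $R_j$ diverged both $u,v$ lay outside the separator (a vertex of a final region is never chosen into $\Sep$) and in different component collections, so $uv\notin E$; consequently $N(V_i)\setminus V_i\subseteq\Sep$ for every $i$.

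\emph{Phase 2 (boundary reduction).} It remains to establish~(ii) and~(iii), i.e.\ $|V_i\cup N(V_i)|=|V_i|+|N(V_i)\cap\Sep|\le r$ and $|N(V_i)\cap\Sep|\le c_1 f(r)$ for each $i$. This is the delicate step and the one I expect to be the main obstacle: a single final region can a priori be adjacent to separator vertices chosen at \emph{every} ancestor level of the recursion, and bounding this by summing $f(\cdot)$ naively over all $\Theta(\log(n/r))$ levels loses a super-constant factor---already for planar graphs the plain recursion overshoots $f(r)n/r$ by a factor $(n/r)^{\Theta(1)}$ once $\alpha>\frac12$. Here I would invoke Frederickson's refinement, which performs the separations in a controlled order (a second phase that re-balances, so that the construction behaves as if $\alpha$ were effectively $\frac12$ and the recursion's cost is dominated by its lowest levels), and thereby guarantees that each final region receives only $O(f(r))$ boundary vertices and $\sum_i|N(V_i)\cap\Sep|=O(f(r)n/r)$. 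Choosing the separators minimal (so every vertex of $\Sep$ has a neighbour in some $V_i$) then also yields $|\Sep|\le\sum_i|N(V_i)\cap\Sep|\le c_2 f(r)n/r$. The only property of $f$ used is strong sublinearity, exactly as in Henzinger et al.'s remark that Frederickson's argument carries over to $f$-separable classes; and~(ii) is enforced by running Phase~1 to the threshold $r-c_1 f(r)$ instead of $r$.

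\emph{Running time and corner cases.} Finding one separator costs $g(n)$ time, the recursion has $O(\log n)$ levels whose sub-problems are vertex disjoint, giving $O(g(n)\log n)$ overall, with the re-balancing of Phase~2 contributing only lower-order overhead. For $r\ge n$ take $t=1$, $V_1=V$, $\Sep=\emptyset$; for $r$ below the constant needed in Phase~1 the statement holds trivially by placing every non-isolated vertex into $\Sep$ (so $|\Sep|\le n\le c_2 f(r)n/r$). Apart from the Phase~2 bound, the argument is routine bookkeeping.
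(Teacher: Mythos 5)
The paper does not actually prove this statement: it is imported verbatim from Frederickson and from Henzinger et al., with only the remark that Frederickson's planar argument ``easily adapts'' to $f$-separable classes. Your Phase~1 is the standard first half of that construction and is essentially correct (the lower bound $(1-\alpha)|R|-f(|R|)$ on the children, the $\Theta(n/r)$ count of final regions, and property~(i) are all fine). The problem is that you have not proved the theorem's actual content. Properties~(ii) and~(iii) --- the \emph{per-region} boundary bound $|N(V_i)\cap\Sep|\le c_1 f(r)$ --- are exactly what distinguishes an $(r,f(r))$-division from a naive recursive separation, and your Phase~2 only asserts them by appeal to ``Frederickson's refinement,'' which you then misdescribe. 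That refinement is not a re-ordering of the separations so that ``$\alpha$ behaves like $\tfrac12$''; it is a second round of splitting applied to every final region whose boundary exceeds $c\cdot f(r)$, using separators that are balanced with respect to a weight function putting uniform weight on the region's boundary vertices and zero weight on its interior, iterated until every piece has $O(f(r))$ boundary. One must then re-do the accounting to check that this extra splitting creates only $O(n/r)$ additional regions and only $O(f(r)n/r)$ additional boundary vertices --- that charging argument is the heart of the theorem and is entirely absent from your write-up. (The paper itself alludes to exactly this two-step structure of Frederickson's proof in the discussion preceding Lemma~\ref{lem:uniform-(r,f)-division}.)

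A secondary error: your claim that for $\alpha>\tfrac12$ the plain recursion ``overshoots $f(r)n/r$ by a factor $(n/r)^{\Theta(1)}$'' is false. For any constant $\alpha<1$ and $f(x)=x^{1-\delta}$, the regions at recursion depth $d$ are disjoint and have size at most $\alpha^d n$, so the separator cost incurred at depth $d$ is at most $\alpha^{-d}\cdot f(\alpha^d n)=n^{1-\delta}(\alpha^{-\delta})^{d}$, a geometrically \emph{increasing} sequence whose sum is dominated by the bottom level and equals $O(n\cdot f(r)/r)$. So the global bound on $|\Sep|$ in~(iii) already follows from Phase~1; what fails without Phase~2 is only the per-region bound, since a single region can inherit boundary from every ancestor level. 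Identifying the correct obstacle and then supplying the weighted-separator argument (or an explicit citation to the relevant lemma of Frederickson) is what is needed to close the gap.
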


We specialize the notion of $(r,f(r))$-divisions first to \emph{uniform}
$(r,f(r))$-divisions, and then generalize to \emph{two-color uniform
$(r,f(r))$-divisions} of a two-colored graph (note: the coloring need not be
proper in the usual sense).
A \emph{uniform $(r,f(r))$-division} is an $(r,f(r)$-division where
the $\Theta(\frac{n}{r})$ sets have a \emph{uniform} (i.e., $\Theta(r)$)
amount of \textbf{internal} vertices.
A \emph{two-color uniform $(r,f(r))$-division} of a two-colored graph is a
uniform $(r,f(r))$-division where each set additionally has the ``same''
proportion of each color class (this is formalized in
Theorem~\ref{thm:2color-uniform}).

It is important to note that while this uniformity condition (i.e., that each 
region is not \emph{too small}) has not been needed in the 
past\footnote{E.g., to analyse local search for SC 
problems~\cite{MustafaR09}, or for fast algorithms to find shortest 
paths~\cite{HenzingerKRS97}.}, it is essential for our analysis of local 
search as applied to MC problems in the next section.
Moreover, to the best of our knowledge, neither Frederickson's construction 
nor more modern constructions (e.g.~\cite{KleinMS13}) of an $r$-division 
explicitly guarantee that the resulting $r$-division is uniform.
To be specific, Frederickson's approach consists of two steps. The first step 
recursively applies the separator theorem until each region together with its 
boundary is ``small enough''. In the second step, each
region where the boundary is ``too large'' is further divided. This is 
accomplished applying the separator theorem to a weighted version of each 
such region where the boundary vertices are uniformly weighted and the non-
boundary vertices are zero-weighted.
Clearly, even a single application of this latter step may result in regions 
with $o(r)$ interior vertices.
Modern approaches (e.g.~\cite{KleinMS13}) similarly involve applying the separator theorem to weighted regions where boundary vertices are uniformly weighted and interior vertices are zero-weighted, i.e., regions which are \emph{too small} are not explicitly avoided.

The remainder of this section is outlined as follows.
We will first show for every $f$-separable graph class $\mathcal{G}$ there is 
a constant $c$ such that every graph in $\mathcal{G}$
has a uniform $(r, c \cdot f(r))$-division (see Lemma~\ref{lem:uniform-(r,f)-division}).
We then use this result to show that for every $f$-separable graph class
$\mathcal{G}$ there is a constant $c'$ such that every two-colored graph in
$\mathcal{G}$ has a two-color uniform $(rq, c' \cdot q \cdot
f(r))$-division for any $q$ -- see Theorem~\ref{thm:2color-uniform}.
Our proofs are constructive and lead to efficient algorithms which produce
such divisions when there is a corresponding efficient algorithm to compute
an $f$-separation.

To prove the first result, we start from a given $(r,f(r))$-division and
``group'' the sets carefully so that we obtain the desired uniformity. For
the two-colored version, we start from a uniform $(r,f(r))$-division and again regroup the sets via a reformulation of the problem as a partitioning
problem on two-dimensional vectors. Namely, we leverage Lemma~\ref{lem:partitioning_easy} to perform the regrouping.

\begin{lemma}\label{lem:uniform-(r,f)-division}
Let $\mathcal{G}$ be a $f$-separable graph class and $G=(V,E)$ be a sufficiently large
$n$-vertex graph in $\mathcal{G}$. There are constants $r_0, x_0$ (depending only on $f$) such that for any $r \in [r_0, \frac{n}{x_0}]$ there is an integer
$t \in \Theta(\frac{n}{r})$
such that $V$ can be partitioned into $t+1$ sets
$\Sep, V_1, \ldots, V_t$ where $c_1,c_2$ are constants independent
of $r$ and the following properties are satisfied.
\begin{enumerate}[(i)]
  \item $N(V_i) \cap V_j= \emptyset$ for each $i\neq j$,
  \item $|V_i| \in [\frac{r}{2},2r]$ for each $i$,
  \item $|N(V_i) \cap \Sep| \leq c_1 \cdot f(r)$ for each $i$
  (thus, $|\Sep| \leq \sum_{i=1}^t|\Sep \cap N(V_i)| \leq \frac{c_2 \cdot f(r) \cdot n}{r}$).
\end{enumerate}
Moreover, such a partition can be found in $O(h(n) + n)$ time where $h(n)$ is
the amount of time required to produce an $(r,f(r))$-division of $G$.
\end{lemma}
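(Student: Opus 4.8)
The plan is to take an off-the-shelf $(r,f(r))$-division from Theorem~\ref{thm:(r,f)-division} and \emph{merge} its pieces greedily into chunks of size $\Theta(r)$, using that the total amount of ``boundary mass'' is too small to prevent many chunks from reaching the target size.

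First I would invoke Theorem~\ref{thm:(r,f)-division} with division parameter $r$, obtaining a partition $\Sep, R_1,\dots,R_s$ with $s\in\Theta(n/r)$, $|R_i\cup N(R_i)|\le r$ (hence $|R_i|\le r$), $|N(R_i)\cap\Sep|\le c_1 f(r)$ for all $i$, and $|\Sep|\le c_2 f(r)n/r$ -- and I would keep this very set $\Sep$ as the separator of the output. Two facts make the merging step clean. Since $N(R_i)$ is disjoint from $R_i$ (open neighbourhoods) and from every other region (property~(i) of the theorem), we have $N(R_i)\subseteq\Sep$; hence for \emph{any} grouping of the regions, the union $V$ of a group satisfies $N(V)=\bigcup_{R_i\in V}N(R_i)\subseteq\Sep$ and is therefore disjoint from every other group, so property~(i) of the lemma will hold automatically and only the size and boundary of each group need to be controlled. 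Moreover, once $r\ge r_0$ is large enough that $c_2 f(r)/r\le 1/2$, the regions carry most of the mass: $\sum_i|R_i|=n-|\Sep|\ge n/2$.

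Next I would process the regions one at a time (in any order), maintaining a single ``current'' group: before inserting $R_i$, close the current group if its accumulated boundary count plus $|N(R_i)\cap\Sep|$ would exceed a threshold $\Lambda:=c_1'f(r)$; after inserting $R_i$, close the current group once its accumulated size reaches $r/2$. Call a closed group \emph{full} if its size lies in $[r/2,3r/2)$ (it was closed on the size rule, and the upper bound uses $|R_i|\le r$) and \emph{underfull} otherwise -- closed on the boundary rule, or the final still-open group. As long as $c_1'\ge c_1$, no single region overshoots $\Lambda$, so every region gets placed. Each underfull group other than the last one has accumulated boundary exceeding $(c_1'-c_1)f(r)$, and since the boundary counts over all groups sum to exactly $\sum_i|N(R_i)\cap\Sep|\le s\,c_1 f(r)$, there are at most $\tfrac{s c_1}{c_1'-c_1}+1$ underfull groups. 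On the other hand, the full groups carry all of the mass $\ge n/2$ except the less than $r/2$ sitting in each underfull group, and each full group has size below $3r/2$, so there are $\Omega(n/r)$ of them. Choosing the constant $c_1'$ large enough relative to $c_1$ and to the hidden constant in $s=\Theta(n/r)$, and taking $x_0$ a suitable constant so that $r\le n/x_0$, makes the number of underfull groups strictly smaller than the number of full groups.

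Finally I would repair the underfull groups by assigning them injectively to distinct full groups and merging. Each merged group then has size in $[r/2,2r)$, which is property~(ii), and boundary at most $2\Lambda=2c_1'f(r)$, which is property~(iii) with $c_1$ replaced by the constant $2c_1'$; the global bound $|\Sep|\le c_2 f(r)n/r$ is inherited verbatim from Theorem~\ref{thm:(r,f)-division}. The number of output groups equals the number of full groups, which is $\Theta(n/r)$. For the running time, producing the division costs $h(n)$, and the greedy merge touches each region and its neighbourhood a constant number of times, i.e.\ $O\big(\sum_i(1+|R_i|+|N(R_i)|)\big)=O(n)$ since $|R_i|\le r$ and $s=\Theta(n/r)$, for $O(h(n)+n)$ overall. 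The one delicate point is the trade-off in the previous paragraph between the \emph{constant} boundary budget $\Lambda$ and the size target $r/2$: $\Lambda$ must be a large enough fixed multiple of $f(r)$ that the boundary-blocked (underfull) groups form only a vanishing fraction of all groups, and this is affordable precisely because the total boundary mass $s\cdot c_1 f(r)=\Theta(f(r)\,n/r)$ is only an $O(f(r)/r)=o(1)$ fraction of the $\Theta(n/r)$ groups' worth of target size.
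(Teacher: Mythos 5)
Your argument is correct, but it is organized quite differently from the paper's proof, so a comparison is worthwhile. The paper starts from an $(\lfloor r/8\rfloor, c_1 f(\lfloor r/8\rfloor))$-division, sorts the pieces by decreasing size, and greedily assigns each piece to the currently smallest group subject to a hard cap of $O(1)$ pieces per group; the per-group boundary bound then follows because each group is a union of constantly many pieces, and a two-phase case analysis (the paper's Claims 1--3, plus a final uniform spreading of leftover small pieces) establishes the $[\frac r2,2r]$ size window. You instead keep the pieces of size up to $r$, make a single linear sweep with a \emph{dual} closing rule (size threshold $r/2$ or boundary budget $\Lambda=c_1'f(r)$), and control the number of budget-closed (underfull) groups by a global charging argument: each such group eats more than $(c_1'-c_1)f(r)$ of the total boundary mass $\sum_i|N(R_i)\cap\Sep|\le s\,c_1 f(r)$, so for $c_1'$ large there are few of them and they can be injectively merged into the $\Omega(n/r)$ full groups. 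This buys you a per-group boundary bound without bounding the \emph{number} of constituent pieces, and it avoids the paper's sorting step and its somewhat delicate interleaved case analysis; the price is that your constants ($c_1'$ relative to $c_1$ and to the hidden constant in $s=\Theta(n/r)$, and $x_0$ to absorb the $+1$ from the final open group) must be chosen with a little care, which you correctly flag. Two minor points to tighten: the containment is really $N(R_i)\setminus R_i\subseteq\Sep$ (Theorem~\ref{thm:(r,f)-division}(i) only forbids $N(R_i)$ from meeting \emph{other} regions), which still yields property~(i) for the merged groups; and you should note explicitly that a full group absorbing an underfull one keeps its boundary below $2\Lambda$ because each of the two constituents separately respected the budget $\Lambda$ --- both are exactly as you use them, so the proof stands.
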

\begin{proof}
We start from an $(\lfloor \frac{r}{8}\rfloor, c_1 \cdot f(\lfloor \frac{r}{8}\rfloor ))$-division
$\mathcal{U} = (\Sep, U_1, \dots, U_\ell)$ as given by
Theorem~\ref{thm:(r,f)-division}
where $\ell = c_\ell \cdot \frac{8\cdot n}{r}$. We then partition $[\ell]$
into $t$ sets $I_1, \dots, I_t$ such that $(\Sep, V_1, \dots, V_t)$ is a
uniform $(r, c \cdot f(r))$-division $\Sep, V_1, \dots, V_t$ where
$V_i = \bigcup_{j \in I_i} U_j$. In order to describe the partitioning, we
first observe some useful properties of $U_1, \dots, U_\ell$ where,
without loss of generality, $|U_1| \geq \dots \geq |U_\ell|$.
Let $n^* = \sum_{j=1}^{\ell} |U_j|$, and set $t = \lceil\frac{n^*}{r}\rceil$. Note that:
\begin{equation}
n^* = \sum_{j=1}^{\ell} |U_j| = n - |\Sep| \geq
n \cdot \left(1 - \frac{c_2 \cdot f(\lfloor\frac{r}{8}\rfloor)}{\lfloor\frac{r}{8}\rfloor}\right).
\end{equation}
From our choice of $t$, the average size of the sets $V_i$ is $\frac{n^*}{t} \in (\frac{r}{1+\frac{r}{n^*}}, r]$.

Pick $r_0$ such that it is divisible by 8 and $c^* = 1 - c_2 \cdot f(\frac{r_0}{8})\cdot (\frac{r_0}{8})^{-1}>0$ and assume $r\geq r_0$ in what follows.
Then $n^* \geq c^* \cdot n$, i.e., $c^* \leq \frac{n^*}{n}$. Now pick $x_0 = 
\frac{3}{c^*}$. Thus, we have $r \leq \frac{n}{x_0} \leq \frac{n^*}{3}$. In 
particular, the average size of our sets $|V_i|$ is in $[\frac{3r}{4},r]$.

Notice that $\frac{\ell}{t} \leq c_\ell \cdot \frac{8\cdot n}{r} \cdot (\frac{n^*}{r})^{-1}
\leq \frac{8 c_\ell}{c^*}$.
We build the sets $I_i$ such that $|I_i| \leq 40 \cdot \frac{c_\ell}{c^*}$.
This provides $|N(V_i) \cap \Sep| \leq 40\cdot \frac{c_\ell}{c^*} \cdot
c_1 f(\lfloor \frac{r}{8} \rfloor) \in O(f(r))$.

We build the sets $I_i$ in two steps.
In the first step we greedily fill the sets $I_i$ according to the largest
unassigned set $U_j$ (formalized as follows).
For each $j^*$ from $1$ to $\ell$, we consider an index $i^* \in [t]$ where
$|I_{i^*}| < 32 \cdot \frac{c_\ell}{c^*}$ and $|V_{i^*}|$ is minimized.
If $|V_{i^*}| \leq \frac{n^*}{t}$, then we place $j^*$ into $I_{i^*}$, that is,
we replace $V_{i^*}$ with $V_{i^*} \cup U_{j^*}$. Otherwise (there is no such index $i^*$),
we proceed to step two (below). Before discussing step two, we first
consider the state of the sets $V_i$ at the moment when this greedy
placement finishes. To this end, let $j^*$ be the index of the first
(i.e., the largest) $U_j$ which has not been placed.

\smallskip

\noindent\textbf{Claim 1:} \textit{If $|V_i| \leq \frac{n^*}{t}$ for every $i$,
then all each set $U_j$ has been merged into some $V_i$ and the $V_i$'s satisfy the conditions of the lemma.} \\
First, suppose there is an unallocated set $U_j$.
Since $|V_i| \leq \frac{n^*}{t}$ for each $i \in [t]$, our greedy procedure
stopped due to having $|I_i| = 32 \cdot \frac{c_\ell}{c^*}$ for each $i \in [t]$.
This contradicts the average size of the $I_i$'s being at most $8 \cdot \frac{c_\ell}{c^*}$.
So, every set $U_j$ must have been merged into some $V_i$. Thus, since
$|V_i| \leq \frac{n^*}{t}$ and the average of the $|V_i|$'s is $\frac{n^*}{t}$,
we have that for every $i \in [t]$, $|V_i| = \frac{n^*}{t}$. Moreover, for each $i \in [t]$,
$|I_i| \leq 8\frac{c_\ell}{c^*}$. Thus the $V_i$'s satisfy the lemma.

\smallskip

\noindent\textbf{Claim 2:} \textit{For every $i \in [t]$, $|V_{i'}| \geq \frac{r}{2}$.} \\
Suppose some index $i$ has $|V_{i}| < \frac{r}{2}$. Notice that, if
$|I_i| < 32 \cdot \frac{c_\ell}{c^*}$, then for every $i' \in [t]$,
$|V_{i'}| \leq |V_i| + \frac{r}{8} \leq \frac{3r}{4} \leq \frac{n^*}{t}$, i.e.,
contradicting Claim~1. Thus, $|I_i| = 32 \cdot \frac{c_\ell}{c^*}$ for each
$i \in [t]$ where $|V_i| < \frac{r}{2}$.
For each $i'\in [t], j'\in[\ell]$, let $I_{i'}^{j'}$ and $V^{j'}_{i'}$ be the
states of $I_{i'}$ and $V_{i'}$(respectively) directly after index $j'$ has
been added to some set $I_{i''}$ by the greedy algorithm.

We now let $\hat{j}$ be the largest index in $I_i$, and assume (without loss of
generality) that for every  $i' \in [t] \setminus \{i\}$,
if $|V^{\hat{j}}_{i'}| < \frac{r}{2}$, then $I^{\hat{j}}_{i'} < 32 \cdot \frac{c_\ell}{c^*}$.
Intuitively, $i$ is the ``first'' index which attains $|I_i| = 32 \cdot \frac{c_\ell}{c^*}$
while still having $|V_i| < \frac{r}{2}$.
Now, since $|I^{\hat{j}}_i| = 32 \cdot \frac{c_\ell}{c^*}$, and $|V^{\hat{j}}_i| < \frac{r}{2}$,
we have $|U_{\hat{j}}| < r \cdot \frac{c^*}{64c_\ell}$.
Thus, for every iteration $j> \hat{j}$, we have $|U_j| < r \cdot \frac{c^*}{64c_\ell}$.
This means that after iteration $\hat{j}$, the number of unallocated vertices is strictly less than:
\begin{center}
$\sum_{j=\hat{j}}^\ell U_j < \ell \cdot r \cdot \frac{c^*}{64c_\ell}
\leq t\cdot 8 \cdot \frac{c_\ell}{c^*} \cdot r \cdot \frac{c^*}{64c_\ell} = \frac{tr}{8}$.
\end{center}
In particular, this means that on average each set $V_i$ can grow by less than $\frac{r}{8}$.
However, due to our choice of $i$, we see that for every $i' \in [t]\setminus \{i\}$,
$|V^{\hat{j}}_{i'}| \leq |V^{\hat{j}}_{i}| + \frac{r}{8}< \frac{r}{2} + \frac{r}{8}$.
This means that even if we allocate all the remaining vertices, the average size of our
sets $V_i$ will be strictly less than  $\frac{3r}{4}$ $\leq \frac{n^*}{t}$, i.e., providing
a contradiction and proving Claim~2.

\smallskip

\noindent\textbf{Claim 3:} \textit{If every $j \in [\ell]$ is placed into
some $I_i$, the $V_i$'s satisfy the conditions of the lemma.} \\
First, note that $|I_i|$ is at most $32\cdot \frac{c_\ell}{c^*}$, i.e., $|N(V_i) \cap
\Sep| \in O(f(r))$. By Claim~2, we see that $|V_i| \geq \frac{n}{2}$ for each $i \in [t]$.
Additionally, from the greedy construction, we have that $|V_i| \leq \frac{n^*}{t} + \frac{r}{8}$.
Thus, $|V_i| \in [\frac{r}{2},\frac{9r}{8}] \subset [\frac{r}{2},2r]$.

\smallskip

We now describe the second step. By Claim~3, we assume there are unassigned sets $U_j$.
By Claim~2, for every $i \in [t]$, $|V_i| \geq \frac{r}{2}$.
Finally, by Claim~1, there is an index $i'$ where $|V_{i'}| > \frac{n^*}{t}$. 
Thus, since we have $t=\lceil\frac{n^*}{r}\rceil$ sets which partition
at most $n^*$ elements, there must be some index $i''$ where $|V_{i''}| \leq \frac{n^*}{t}$ and
$|I_{i''}| = 32\cdot \frac{c_\ell}{c^*}$,
i.e., $|U_{j^*}| \leq \frac{n^*}{t} \cdot (32\cdot \frac{c_\ell}{c^*})^{-1} \leq \frac{r\cdot c^*}{32\cdot c_\ell}$ where
$U_{j^*}$ is the largest unassigned set.
Notice that there are at most $\ell \leq t \cdot 8 \cdot \frac{c_\ell}{c^*}$
indices which can be assigned and all the remaining sets contain at most $|U_{j^*}|$
vertices. If we spread these remaining $U_j$'s uniformly throughout our $V_i$'s,
we will place at most $8\cdot \frac{c_\ell}{c^*} \cdot |U_{j^*}| \leq \frac{r}{4}$ vertices into each $V_i$.
Thus, for each $i \in [t]$, we have $|V_i| \leq \frac{n^*}{t} + \frac{r}{8} + \frac{r}{4} \leq 2r$.
So, by uniformly assigning these remaining indices, we have
$|I_i| \leq 40\cdot \frac{c_\ell}{c^*}$, $|V_i| \in [\frac{r}{2},2r]$,
and $|N(V_i) \cap \Sep| \leq 40\cdot \frac{c_\ell}{c^*} \cdot
c_1 f(\lfloor \frac{r}{8} \rfloor) \in O(f(r))$, as needed.

We conclude with a brief discussion of the time complexity. First, we generate
the $(\lfloor \frac{r}{8} \rfloor, c_1 f(\lfloor \frac{r}{8} \rfloor))$-division
in $h(n)$ time. We then sort the sets $|U_1| \geq \ldots \geq |U_\ell|$ (this can
be done in $O(n)$ time via bucket sort). In the next step we greedily fill the
index sets -- this takes $O(n)$ time. Finally, we place the remaining ``small''
sets uniformly throughout the $V_i$'s -- taking again $O(n)$ time. Thus,
we have $O(h(n) + n)$ time in total.
\end{proof}

We now prove a technical lemma which, together with the previous
lemma regarding uniform divisions, provides our uniform two-color balanced
divisions (see Theorem~\ref{thm:2color-uniform}) as discussed following this lemma.

\begin{lemma}\label{lem:partitioning_easy}
Let $c$ and $c'$ be positive constants, and $A=\{(a_1,b_1),\dots,(a_n,b_n)\}
\subseteq (\mathbb{Q}\cap[0,\infty))^2$ be a set of $2$-dimensional vectors where
$a_i+b_i \in [c',c]$ for each $i \in [n]$, and $\alpha \in [0,1]$ such that
$\sum_{i=1}^n a_i = \alpha\cdot \sum_{i=1}^n b_i$.
There is a permutation $p_1, \ldots, p_n$ of $[n]$ such that for any $1 \leq i \leq i' \leq n$, $|\sum_{j=i}^{i'} a_{p_j} - \alpha \cdot b_{p_j}| \leq 2\cdot c$.

Thus for any positive integer $q$, when $n$ is sufficiently larger than $q$,
there exist numbers $k\leq n$ and $q' \in [q,2q-1]$ and a partitioning of
$[n]$ into subsets $I_1,\dots,I_k$ such that for each $j\in [k]$ we have:

\begin{enumerate}[(i)]
\itemsep=0pt
\item $|I_j| \in \{q',q'+1\}$ (thus, $\sum_{i \in I_j} a_i+b_i \in [q'\cdot c', (q'+1)\cdot c]$), and
\item $|\sum_{i \in I_j} a_i - \alpha \cdot b_i| \leq 2\cdot c$.
\end{enumerate}

\noindent Moreover, the permutation $p_1, \ldots, p_n$ and partition can be computed in $O(n)$ time.
\end{lemma}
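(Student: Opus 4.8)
The plan is to build the permutation greedily, maintaining the invariant that the running signed imbalance $S_{i'} := \sum_{j=1}^{i'} (a_{p_j} - \alpha\, b_{p_j})$ stays in the window $[-c, c]$ after every prefix. Since $S_0 = 0$, $S_n = \sum a_i - \alpha \sum b_i = 0$, and each step changes the partial sum by $a_{p_j} - \alpha\, b_{p_j}$, whose absolute value is at most $\max(a_{p_j}, \alpha b_{p_j}) \le a_{p_j} + b_{p_j} \le c$, it suffices to show we can always pick the next index so as to move $S$ back toward $0$ (or at least not out of the window). Concretely, partition the unused indices into the ``positive'' ones (those with $a_i - \alpha b_i \ge 0$) and the ``negative'' ones; if the current partial sum is $> 0$ we must still have an unused negative index available, and vice versa — this follows because the full sum of the as-yet-unused increments equals $-S_{i'}$, so if $S_{i'} > 0$ the unused increments sum to something negative and hence at least one unused negative index exists. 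Appending that index keeps $S$ in $[-c,c]$: if $S_{i'} \in [0,c]$ and we add a negative increment of magnitude at most $c$, the new value lies in $[-c, c]$. Hence every partial sum $S_{i'}$, and therefore every interval sum $S_{i'} - S_{i-1}$, is bounded by $2c$ in absolute value, which is the first claim.

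For the partitioning statement, I would simply chop the permutation $p_1,\dots,p_n$ into $k$ consecutive blocks of nearly-equal length. Choose $k = \lfloor n/q' \rfloor$ for an appropriate $q'\in[q,2q-1]$: writing $n = q\cdot m + \rho$ with $0 \le \rho < q$ (and $m$ large since $n$ is large relative to $q$), one can always select $q' \in [q, 2q-1]$ and a number of blocks $k$ so that $n$ splits into $k$ blocks of sizes in $\{q', q'+1\}$ — this is the elementary fact that any integer $n \ge q(q-1)$ (say) can be written as a sum of $k$ terms each equal to $q'$ or $q'+1$ for a suitable $q'$ in that range. Property~(i) is then immediate from $a_i + b_i \in [c', c]$ summed over a block of size $q'$ or $q'+1$. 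Property~(ii) is exactly the interval bound from the first part applied to the block $I_j = \{p_i, \dots, p_{i'}\}$.

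The main obstacle — or rather the only point requiring care — is the combinatorial bookkeeping in the block-length step: verifying that for $n$ large enough relative to $q$ there genuinely is a valid $(k, q')$ pair with all block sizes in $\{q', q'+1\}$ and $q' \in [q, 2q-1]$. I would handle this by taking $q' = \lfloor n/k \rfloor$ with $k := \lceil n/q \rceil$ (so $q' \ge q$ automatically once $n$ is large), then checking $q' \le 2q - 1$ holds for $n$ sufficiently large, and finally noting $n - kq' \in [0, k)$ so the remainder can be distributed one extra element per block. Everything is computable in $O(n)$ time: partitioning the indices by the sign of $a_i - \alpha b_i$ is one pass, the greedy selection repeatedly pops from whichever of the two lists is currently needed (another $O(n)$ total), and the block-chopping is a final linear pass.
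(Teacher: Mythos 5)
Your proposal follows essentially the same route as the paper's proof: a greedy sign-balancing construction of the permutation that keeps every prefix sum of $d_i = a_i - \alpha b_i$ in $[-c,c]$ (using $\sum_i d_i = 0$ and $|d_i|\leq c$ to guarantee an index of the needed sign always remains), followed by chopping the permutation into consecutive blocks of nearly equal length. One small arithmetic slip in your concrete block-length choice: with $k=\lceil n/q\rceil$ you get $q'=\lfloor n/k\rfloor = q-1$ whenever $q\nmid n$ (since then $n/k\in(q-1,q)$ for large $n$), which falls outside the required range $[q,2q-1]$; taking $k=\lfloor n/q\rfloor$ instead (as the paper does) gives $q'=q$ for $n$ sufficiently large, and the remainder $n-kq<k$ is absorbed by giving one extra element to some blocks.
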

\begin{proof}
First, we partition $[n]$ into three sets $A_{> 0}$, $A_{<0}$, and $A_{=0}$
according to whether the \emph{weighted}
difference $d_i = a_i - \alpha\cdot b_i$ is positive, negative, or 0
(respectively). Note that, $\sum_{i=1}^{n} d_i = 0$ and for each $i \in [n]$, $|d_i| \leq c$.
We will pick indices one at a time from the sets $A_{> 0}$, $A_{< 0}$, $A_{=0}$ to form the desired permutation.

We now construct a permutation $p_1, \ldots, p_n$ on the indices $[n]$ so that any
consecutive subsequence $S$ has $|\sum_{i \in S} d_{p_i}| \leq 2\cdot c$.
For notational convenience, for each $j \in [n]$, we use $\delta_{<j}$ to denote $\sum_{i=1}^{j-1} d_{p_i}$. We now pick the $p_i$'s so that for each $j$, $|\delta_{<j}| \leq c$. We initialize $\delta_{<1}=0$.
For each $j$ from $1$ to $n$ we proceed as follows. Assume that
$|\delta_{<j}| \leq c$. We further assume that any index
$i \in \{p_1, \ldots, p_{j-1}\}$ has been removed from the sets $A_{>0}$,
$A_{<0}$, and $A_{=0}$.
If $\delta_{<j}$ is negative, $A_{> 0}$ must contain some index $j^*$ since $\sum_{i \in [n]} d_i =0$. Moreover, if we set $p_j= j^*$, we have $|\delta_{< j+1}| \leq c$ as needed (we also remove the index $j^*$ from $A_{>0}$ at this point).
Similarly, if $\delta_{<j}$ is positive, we pick any index $j^*$ from $A_{<0}$,
remove it from $A_{<0}$, and set $p_j = j^*$.
Finally, when $\delta_{<j}=0$), we simply take any index $j^*$ from $A_{>0} \cup  A_{<0} \cup A_{=0}$, remove it from $A_{>0} \cup A_{<0} \cup A_{=0}$, and set $p_j=j^*$.
Thus, in all cases we have $|\delta_{<j+1}| \leq c$.

Notice that, for any $1\leq j \leq j' \leq n$, we have $|\sum_{i=j}^{j'} d_{p_i}| = |\delta_{<j} - \delta_{<j'+1}|$ $\leq$ $|\delta_{<j}| + |\delta_{<j'+1}|$ $\leq 2 \cdot c$ (as needed for the first part of the lemma).

It remains to partition $[n]$ to form the sets $I_1, \ldots, I_k$.
This is accomplished
by splitting $p_1, \ldots, p_n$ into $t$ consecutive subsequences of almost equal size.
Namely, we pick $k = \lfloor \frac{n}{q} \rfloor$. We further let $z = n \mod q$, and $w = \lfloor \frac{z}{k} \rfloor$, $p = n - (q+w)\cdot k$.
From these integers, we make the sets $I_1,$ $\ldots,$ $I_{p}$ with
$q+w+1$ indices each and the sets $I_{p+1}, \ldots, I_k$ with $q+w$ indices each
by partitioning $\pi$ into these sets in order.  A simple calculation shows that
these sets satisfy the conditions of the lemma. Moreover, this construction
is clearly performed in $O(n)$ time.
\end{proof}

We will now use Lemmas~\ref{lem:uniform-(r,f)-division}~and~\ref{lem:partitioning_easy} to prove Theorem~\ref{thm:2color-uniform}. In particular, for a given two-colored graph $G$ where $G$ belongs to an $f$-separable graph class, we first construct a uniform $(r,c\cdot f(r))$-division $(\Sep, V_1, \ldots, V_t)$ of $G$ as in Lemma~\ref{lem:uniform-(r,f)-division}. From this division we can again carefully combine the $V_i$'s to make new sets $W_j$
where each $W_j$ has roughly the same size and contains roughly the same proportion of each color class as occurring in $G$. This follows by simply imagining each region $V_i$ of the uniform $(r,c\cdot f(r))$-division as a two-dimensional vector (according to its coloring) and then applying Lemma~\ref{lem:partitioning_easy}.

\begin{theorem}\label{thm:2color-uniform}
Let $\mathcal{G}$ be an $f$-separable graph class and $G=(V,E)$ be a
2-colored $n$-vertex graph in $\mathcal{G}$ with color classes $\Gamma_1,
\Gamma_2$ such that $|\Gamma_2| \geq |\Gamma_1|$. For any $q$ and $r \ll n$ where $r$ is suitably large, there
is an integer $t \in \Theta(\frac{n}{q\cdot r})$
such that $V$ can be partitioned into $t+1$ sets
$\Sep, V_1, \ldots, V_t$ where $c_1,c_2$ are constants independent
of our parameters $n,r,q$ and there is an integer $q' \in [q,2q-1]$ all satisfying
the following properties.
\begin{enumerate}[(i)]
  \item\label{item:sep-property} $N(V_i) \cap V_j = \emptyset$ for each $i\neq j$,
  \item\label{item:size-part} $|V_i| \in [\frac{q'\cdot r}{2},2\cdot (q'+1)\cdot r]$ for each $i$,
  \item\label{item:small-sep} $|N(V_i) \cap \Sep| \leq c_1 \cdot q \cdot f(r)$ for each $i$
  (thus, $|\Sep| \leq \sum_{i=1}^t|\Sep \cap N(V_i)| \leq \frac{c_2 \cdot f(r) \cdot n}{r}$).
  \item\label{item:balanced-part} $\left| |V_i \cap \Gamma_1| - \frac{|\Gamma_1|}{|\Gamma_2|} \cdot |V_i \cap \Gamma_2| \right| \leq 2\cdot r$
  \end{enumerate}

Moreover, such a partition can be found in $O(h(n) + n)$ time where $h(n)$
is the amount of time required to produce a uniform $(r,c\cdot f(r))$-division of
$G$.
\end{theorem}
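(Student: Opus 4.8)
The plan is to take the uniform $(r,c\cdot f(r))$-division of $G$ guaranteed by Lemma~\ref{lem:uniform-(r,f)-division}, regard each of its $\Theta(n/r)$ regions as a two-dimensional vector recording how many of its vertices lie in $\Gamma_1$ and how many in $\Gamma_2$, and then merge the regions into $t\in\Theta(n/(qr))$ groups by feeding these vectors to Lemma~\ref{lem:partitioning_easy}. Concretely, I would first invoke Lemma~\ref{lem:uniform-(r,f)-division} with parameter $r$ (legitimate since $r$ is suitably large and $r\ll n$, so $r\in[r_0,n/x_0]$), obtaining a partition $(\Sep,U_1,\dots,U_\ell)$ with $\ell\in\Theta(n/r)$, $|U_i|\in[\frac r2,2r]$, pairwise non-adjacent $U_i$'s modulo $\Sep$, and $|N(U_i)\cap\Sep|\le c_1f(r)$ (hence $|\Sep|\le c_2f(r)n/r$). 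For each $i$ set $a_i:=|U_i\cap\Gamma_1|$ and $b_i:=|U_i\cap\Gamma_2|$, so $a_i+b_i=|U_i|\in[\frac r2,2r]$.

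Next I would apply Lemma~\ref{lem:partitioning_easy} to these vectors with $c'=\frac r2$, $c=2r$, the ratio $\alpha$ equal to $\sum_i a_i/\sum_i b_i=|\Gamma_1\setminus\Sep|/|\Gamma_2\setminus\Sep|$ (so that the exact-balance hypothesis $\sum_ia_i=\alpha\sum_ib_i$ holds by construction; swap the roles of the two colors if needed so that $\alpha\le 1$), and parameter $q$ — valid because the number $\ell$ of vectors is $\Theta(n/r)\gg q$ since $n\gg qr$. This yields an integer $q'\in[q,2q-1]$, which is exactly the $q'$ claimed in the theorem, and a partition of $[\ell]$ into $I_1,\dots,I_t$ with $|I_j|\in\{q',q'+1\}$ and $\bigl|\sum_{i\in I_j}(a_i-\alpha b_i)\bigr|\le 2c$. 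Setting $V_j:=\bigcup_{i\in I_j}U_i$ and keeping $\Sep$ unchanged gives the desired partition, with $t\in\Theta(\ell/q)=\Theta(n/(qr))$.

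Checking the four properties is then largely bookkeeping. Property~(\ref{item:sep-property}) holds because $N(V_j)=\bigcup_{i\in I_j}N(U_i)$ and each $N(U_i)$ misses every other $U_{i'}$. Property~(\ref{item:small-sep}) follows from $|N(V_j)\cap\Sep|\le\sum_{i\in I_j}|N(U_i)\cap\Sep|\le(q'+1)c_1f(r)\in O(q\,f(r))$, and the global bound $|\Sep|\le c_2f(r)n/r$ is inherited from Lemma~\ref{lem:uniform-(r,f)-division}. Property~(\ref{item:size-part}) follows from $|V_j|=\sum_{i\in I_j}|U_i|$ with $|I_j|\in\{q',q'+1\}$ and $|U_i|\in[\frac r2,2r]$. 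Property~(\ref{item:balanced-part}) follows (with respect to the ratio $\alpha$) from $\bigl||V_j\cap\Gamma_1|-\alpha|V_j\cap\Gamma_2|\bigr|=\bigl|\sum_{i\in I_j}(a_i-\alpha b_i)\bigr|\le 2c=O(r)$. The running time is $O(h(n)+n)$: $O(h(n))$ for the uniform division, $O(n)$ to build and sort the vectors, and $O(n)$ for Lemma~\ref{lem:partitioning_easy}.

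The main obstacle is the gap between balancing against the non-separator ratio $|\Gamma_1\setminus\Sep|/|\Gamma_2\setminus\Sep|$, which the above delivers cleanly, and balancing against the \emph{global} ratio $|\Gamma_1|/|\Gamma_2|$ demanded in property~(\ref{item:balanced-part}): the separator produced by the uniform division may be arbitrarily skewed in its color composition. To close this gap I would fold the separator back in by adding auxiliary vectors that account for $(|\Sep\cap\Gamma_1|,|\Sep\cap\Gamma_2|)$, chopped into pieces of weight in $[\frac r2,2r]$, to the collection handed to Lemma~\ref{lem:partitioning_easy}; then $\sum a_i=|\Gamma_1|$, $\sum b_i=|\Gamma_2|$ and $\alpha=|\Gamma_1|/|\Gamma_2|$ exactly, and the imbalance of $V_j$ equals the group imbalance minus the contribution of the auxiliary pieces in that group. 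The delicate points — and where the constants must be chosen carefully to reach the stated additive slack $2r$ — are (a) ensuring the auxiliary pieces are spread across the groups rather than bunched together, e.g.\ by breaking ties in the permutation construction of Lemma~\ref{lem:partitioning_easy} in favour of genuine regions, which is affordable because the auxiliary pieces are only an $O(f(r)/r)=O(r^{-\delta})$ fraction of all vectors, so that no group becomes degenerate and property~(\ref{item:size-part}) survives; and (b) bounding the residual auxiliary contribution to property~(\ref{item:balanced-part}) by an appropriate choice of the piece size. (In the intended application $q$ and $r$ are constants, so even the cruder bound obtained without (b) suffices.)
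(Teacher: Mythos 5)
Your proposal follows exactly the route the paper takes (the paper only sketches it in the paragraph preceding the theorem): build a uniform $(r,c\cdot f(r))$-division via Lemma~\ref{lem:uniform-(r,f)-division}, encode each region as the vector $(|U_i\cap\Gamma_1|,|U_i\cap\Gamma_2|)$ with $a_i+b_i\in[\frac r2,2r]$, and merge regions into groups of $q'$ or $q'+1$ via Lemma~\ref{lem:partitioning_easy}; properties (\ref{item:sep-property})--(\ref{item:balanced-part}) and the running time then follow by the bookkeeping you describe. So in substance you have reproduced the paper's argument.

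The ``main obstacle'' you raise is a genuine imprecision that the paper itself silently ignores: Lemma~\ref{lem:partitioning_easy} forces $\alpha=\sum a_i/\sum b_i=|\Gamma_1\setminus\Sep|/|\Gamma_2\setminus\Sep|$, whereas property (\ref{item:balanced-part}) is stated with the global ratio $|\Gamma_1|/|\Gamma_2|$, and switching ratios costs an extra $|\alpha-\tfrac{|\Gamma_1|}{|\Gamma_2|}|\cdot|V_i\cap\Gamma_2|=O(q\cdot f(r))$, which exceeds $2r$ in the regime $q=r$ used later. Be aware, though, that your auxiliary-vector patch does not close this to the stated bound either: a group can receive $\Theta(qf(r)/r)$ auxiliary pieces of weight up to $2r$ each, so the residual you must subtract off in (b) is again $\Theta(qf(r))$ --- the same order as the error you set out to remove. (Also note that even the clean application of Lemma~\ref{lem:partitioning_easy} with $c=2r$ yields $4r$ rather than $2r$ in (\ref{item:balanced-part}).) None of this is fatal: in the proof of Theorem~\ref{thm:main-thm} property (\ref{item:balanced-part}) is only used to bound $||\mathcal{A}_i|-|\mathcal{O}_i||$, and a bound of $O(bf(b))$ there still gives $|\mathcal{A}_i|/|\bar{\mathcal{O}}_i|\ge 1-O(f(b)/b)$, so only the constant $28$ changes. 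But if you want Theorem~\ref{thm:2color-uniform} exactly as stated, you should either restate (\ref{item:balanced-part}) with the ratio taken over $V\setminus\Sep$ (and a constant $4$ instead of $2$), or quantify the error terms above rather than leaving (a) and (b) as informal remarks.
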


\section{PTAS for $f$-Separable Maximum Coverage}\label{AnalysisPTAS}

In this section we formalize the notion of $f$-separable instances of the MC problem and prove our main result -- see Theorem~\ref{thm:main-thm}.

\begin{definition}\label{def:planarizable}
  A class $\mathcal{C}$ of instances of MC is called \emph{$f$-separable} if for any two disjoint feasible solutions $\mathcal{F}$ and $\mathcal{F}'$ of any instance in $\mathcal{C}$ there exists an $f$-separable graph $G$ with node set $\mathcal{F}\cup\mathcal{F}'$ with the following \emph{exchange} property.  If there is a ground element $u\in U$ that is covered both by $\mathcal{F}$ and $\mathcal{F}'$ then there exists an edge $(S,S')$ in $G$ with $S\in\mathcal{F}$ and $S'\in \mathcal{F}'$ with $u\in S\cap S'$.
\end{definition}

\begin{theorem}\label{thm:main-thm}
  Let $f\in o(n)$ be non-decreasing sublinear function. Then, any $f$-separable class of instances of MC that is closed under removing elements and sets admits a PTAS.
\end{theorem}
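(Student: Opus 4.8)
The plan is to analyze the natural $b$-swap local search algorithm: start with an arbitrary feasible solution $\mathcal{F}$ of size exactly $k$ (pad with dummy empty sets if needed), and repeatedly, as long as it strictly increases coverage, replace up to $b$ sets of $\mathcal{F}$ by up to $b$ sets outside $\mathcal{F}$ while keeping the cardinality equal to $k$. Here $b = b(\epsilon)$ is a constant to be chosen. Since coverage strictly increases and is bounded, this terminates in polynomially many steps, each of which examines $n^{O(b)}$ candidate swaps; so the running time is polynomial. It remains to show the locally optimal solution $\mathcal{F}$ satisfies $\mathrm{cov}(\mathcal{F}) \geq (1-\epsilon)\,\mathrm{cov}(\mathcal{F}^\star)$ where $\mathcal{F}^\star = \opt$ is a global optimum; by discarding common sets we may assume $\mathcal{F}$ and $\mathcal{F}^\star$ are disjoint (this only helps, since the cardinality slack frees up on both sides equally).

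Next I would set up the exchange graph $G$ on node set $\mathcal{F} \cup \mathcal{F}^\star$ guaranteed by $f$-separability, two-color it with $\Gamma_1 = \mathcal{F}^\star$ (the smaller side, up to symmetry — if $\mathcal{F}$ is smaller the argument is easier since then $\mathrm{cov}(\mathcal{F})\ge\mathrm{cov}(\mathcal{F}^\star)$ is essentially free once balanced swaps exist) and $\Gamma_2 = \mathcal{F}$, and invoke Theorem~\ref{thm:2color-uniform} with $q \asymp 1/\epsilon^2$ and $r$ a suitably large constant so that $f(r)/r$ is tiny. This yields a partition into the separator $\Sep$ and pieces $V_1,\dots,V_t$ where each $V_i$ has few boundary vertices (total separator size $O(f(r)n/r) = O(\epsilon n)$) and, crucially, each $V_i$ contains the two color classes in almost the global ratio: $\bigl||V_i\cap\mathcal{F}^\star| - \tfrac{|\mathcal{F}^\star|}{|\mathcal{F}|}|V_i\cap\mathcal{F}|\bigr| \le 2r = O(1)$. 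From these roughly-balanced pieces I would build \emph{perfectly} balanced candidate swaps: greedily merge pieces (and, where the additive $O(1)$ error accumulates the wrong way, trim or transfer a constant number of individual sets between swaps) so as to obtain a collection of disjoint swaps $(A_s, B_s)$ with $A_s \subseteq \mathcal{F}$, $B_s \subseteq \mathcal{F}^\star$, $|A_s| = |B_s|$ exactly (or $|A_s|\ge|B_s|$ — a swap removing more than it adds is always feasible), each of size $O(b)$, together covering all of $\mathcal{F}\cup\mathcal{F}^\star$ except the vertices touched by $\Sep$. Each such swap is a legal move for the local search since it keeps $|\mathcal{F}|=k$.

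For the accounting, fix a swap $(A_s,B_s)$ and consider applying it to $\mathcal{F}$. By the exchange property of $G$, any ground element covered by both $\mathcal{F}$ and $\mathcal{F}^\star$ is "witnessed" by an edge of $G$, hence — unless that edge is incident to $\Sep$ — lies inside a single piece and is handled entirely within one swap; so after performing swap $s$, the only elements we risk losing are those covered in $\mathcal{F}$ solely by sets in $A_s$ whose "replacement" coverage sits across a separator edge, and these can be charged to the separator. Writing $E_\Sep$ for the ground elements covered by a set in $\mathcal{F}$ incident to $\Sep$ in $G$, we get for every $s$ that $\mathrm{cov}(\mathcal{F}) \ge \mathrm{cov}\bigl((\mathcal{F}\setminus A_s)\cup B_s\bigr) = \mathrm{cov}(\mathcal{F}) - \mathrm{(elements lost in swap } s) + \mathrm{(elements newly gained in swap } s)$, and local optimality forces gained $\le$ lost for each $s$. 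Summing over all $s$: since the swaps are disjoint and cover everything outside $\Sep$, the total gain is at least $\mathrm{cov}(\mathcal{F}^\star) - \mathrm{cov}(\mathcal{F}) - |E_\Sep|$ (elements covered by $\mathcal{F}^\star$ but not $\mathcal{F}$ are gained somewhere, minus separator slack), while the total loss is at most $|E_\Sep|$ (an element of $\mathcal{F}$ lost in some swap must be re-coverable only through a separator edge, else it would be recovered inside the swap). Hence $\mathrm{cov}(\mathcal{F}^\star) - \mathrm{cov}(\mathcal{F}) \le 2|E_\Sep|$, and $|E_\Sep| \le |\Sep| \cdot (\text{max set size})$ — here I would instead bound $|E_\Sep|$ by observing each separator set contributes its own ground elements, and $|\Sep| = O(f(r)n/r)$, and $n = |\mathcal{F}|+|\mathcal{F}^\star| \le 2k \le 2\,\mathrm{cov}(\mathcal{F}^\star)$ (assuming w.l.o.g. no empty real sets), giving $|E_\Sep| \le \epsilon\,\mathrm{cov}(\mathcal{F}^\star)$ for $r$ large. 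Therefore $\mathrm{cov}(\mathcal{F}) \ge (1-2\epsilon)\,\mathrm{cov}(\mathcal{F}^\star)$, which after rescaling $\epsilon$ is the claim.

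The main obstacle — and the reason the color-balanced separator of Section~2 is needed — is converting the "each piece is balanced up to an additive constant" guarantee into a family of \emph{exactly} cardinality-preserving swaps while simultaneously keeping each swap of constant size and keeping the union of swaps equal to everything outside a small separator; naively merging pieces could blow up swap sizes or leave a residual imbalance proportional to $n$. I expect the delicate part of the write-up to be this combinatorial regrouping together with the careful case analysis of which ground elements can actually be lost in a swap — in particular the distinction (absent in the SC setting) between elements covered by both solutions, by exactly one, or by neither, and checking that in every case local optimality of $\mathcal{F}$ yields the per-swap inequality "gain $\le$ loss" with all the uncharged loss absorbed by $E_\Sep$.
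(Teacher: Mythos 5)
Your high-level architecture matches the paper's (local search with $b$-swaps, reduction to disjoint $\mathcal{A},\mathcal{O}$, exchange graph, two-color balanced division, one candidate swap per piece), but the quantitative core of your analysis has a fatal flaw: you treat losses as \emph{absolute} element counts chargeable to the separator. You bound the number of ground elements $|E_\Sep|$ covered by separator sets via $|\Sep|\le O(f(r)n/r)$, ``observing each separator set contributes its own ground elements'' --- but a single set can cover $\Omega(\opt)$ elements, so $|\Sep|\le\epsilon k$ gives no bound whatsoever on $|E_\Sep|$. Moreover, your claim that the total loss is at most $|E_\Sep|$ is false: an element covered by $\mathcal{A}$ only via sets in $A_s$ and by \emph{no} set of $\mathcal{O}$ is genuinely lost in swap $s$ with no separator edge involved, and there can be $\Theta(\alg)$ such elements. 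The paper avoids both problems by never charging anything to the separator in absolute terms: every set of $\mathcal{O}\cap\Sep$ is attached (by adding edges) to some piece, so its coverage counts toward the gain $W_i$ of that piece, and the accounting is purely \emph{relative} --- one shows $\sum_i|L_i|\le\alg-|Z|$ and $\sum_i|W_i|\ge\opt-|Z|$ and picks by averaging a single piece with $|L_i|/|W_i|\le\alg/\opt$.

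The second gap is your ``trim or transfer a constant number of sets'' step for converting the $O(r)$-imbalanced pieces into exactly balanced swaps. Transferring a set $B\in\mathcal{O}$ into another swap forfeits its entire contribution (its elements need not be newly covered there), and that contribution can be $\Omega(\opt)$; trimming arbitrary sets from $\bar{\mathcal{O}}_i$ has the same problem. The paper's fix is the ingredient you are missing: order the sets of $\bar{\mathcal{O}}_i$ greedily by marginal coverage and keep only the first $|\mathcal{A}_i|$ of them; submodularity guarantees this prefix captures at least a $|\mathcal{A}_i|/|\bar{\mathcal{O}}_i|$ fraction of $W_i$, and the color-balance together with the uniformity $|V_i|=\Omega(b^2)$ makes this fraction $1-O(f(b)/b)$. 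Combining that ratio with the averaged bound on $|L_i|/|W_i|$ is what closes the argument and produces a feasible, profitable swap; without the greedy/submodular selection and the cardinality-ratio estimate, your per-swap inequality cannot be closed.
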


\begin{proof}
Our algorithm is based on local search. We fix a positive constant integer
$b\geq 1$.  Given an $f$-separable instance of MC, we pick an arbitrary
initial solution $\mathcal{A}$.  We check if it is possible to replace
$b$ sets in $\mathcal{A}$ with $b$ sets from $\mathcal{F}$ so that the
total number of elements covered is increased.  We perform such a
replacement (swap) as long as there is one. We stop if there is no
profitable swap and output the resulting solution.

In what follows, we show that for sufficiently large $b$ the
above algorithm yields a $(1-28c_1c_2f(b)/b)$-approximate
solution and that it runs in polynomial time (for
constant $b$). Here, $c_1$ and $c_2$ are the constants from
Theorem~\ref{thm:2color-uniform}. This will prove the claim of
Theorem~\ref{thm:main-thm} by letting $b$ sufficiently large.
Note that, if $c_1 < 1$, then we see that Theorem~\ref{thm:2color-uniform} 
also holds for $c_1 = 1$. Similarly, if $c_2 < 1$, then Theorem~\ref{thm:2color-uniform} 
also holds for $c_2 = 1$. Thus, we can safely assume that $c_1,c_2 \geq 1$. 

Since each step increases the number of covered elements, the number
of iterations of the above algorithm is at most $|U|$.  Each iteration
takes $O(k^b|\mathcal{F}|^{b})$ time.  Therefore, the total running
time of the algorithm is polynomial for constant $b$.

We now analyze the performance guarantee of the algorithm. To this
end, let $\mathcal{O}$ be an optimum solution to the instance and let
$\mathcal{A}$ be the (locally optimal) solution output by the
algorithm.  Let $\opt$, $\alg$ denote the number of elements covered
by $\mathcal{O}$, $\mathcal{A}$, respectively.

Suppose that $\alg<\left(1-\frac{28c_1c_2f(b)}{b}\right)\opt$. We want
to show that this would imply that there is a profitable swap as
this would contradict the local optimality of $\mathcal{A}$ and
hence complete the proof.

We claim that it suffices to consider the case when $\mathcal{O},\mathcal{A}$ are disjoint, which is justified as follows.  Assume that $\mathcal{O}\cap\mathcal{A}\neq\emptyset$.  We remove the sets in $\mathcal{O}\cap\mathcal{A}$ from $\mathcal{F}$ and all the elements covered by these sets from $U$.  Moreover, we decrease $k$ by $|\mathcal{O}\cap\mathcal{A}|$ and replace $\mathcal{O}$ with $\mathcal{O}\setminus\mathcal{A}$ and $\mathcal{A}$ with $\mathcal{A}\setminus\mathcal{O}$.  Since our class of instances is closed under removing sets and elements the resulting instance is still contained in the class.  Moreover, $|\bigcup\mathcal{A}|<\left(1-\frac{28c_1c_2f(b)}{b}\right)|\bigcup\mathcal{O}|$.  Finally, if we are able to show that there exists a feasible and profitable swap in the reduced instance then the same swap is also feasible and profitable in the original instance (with original solutions $\mathcal{A}$ and $\mathcal{O}$).

Therefore, we assume from now on that $\mathcal{A}$ and $\mathcal{O}$ are disjoint. Since our instance is $f$-separable, there exists an $f$-separable graph $G$
with precisely $2k$ nodes for the two feasible solutions $\mathcal{O}$
and $\mathcal{A}$ with the properties stated in
Definition~\ref{def:planarizable}.

We now apply our two colored separator theorem (Theorem~\ref{thm:2color-uniform}) to
$G$ with color classes $\Gamma_1=\mathcal{O}$ and $\Gamma_2=\mathcal{A}$ and
with parameters $r=b$ and $q=b$.

Since $|\mathcal{O}|=|\mathcal{A}|=k$, the two color classes in $G$
are perfectly balanced. Let $\mathcal{A}_i=\mathcal{A}\cap V_i$,
$\mathcal{O}_i=\mathcal{O}\cap V_i$,
$N_i^{\mathcal{O}}=N(V_i)\cap\Sep\cap\mathcal{O}$ and
$\bar{\mathcal{O}}_i=\mathcal{O}_i\cup N_i^{\mathcal{O}}$ for any part
$V_i$ with $i\in[t]$ of the resulting subdivision of $G$.

We can assume that every set in $\mathcal{O}$ is contained in $\bar{\mathcal{O}}_i$
for some $i\in [t]$. This can be achieved by suitably adding edges to $G$ while
maintaining the necessary properties of the uniform colored subdivision.  More
precisely, for every of the at most $c_2\cdot f(b)\cdot \frac{n}{b}$ many sets in
$\mathcal{X}\cap\mathcal{O}$ we add an edge to a set in $\mathcal{A}_i$
for some
$i\in[t]$. By Theorem~\ref{thm:2color-uniform}, we have $|V_i|\leq 4b^2$ and $|N(V_i)\cap\mathcal{X}|\leq c_1bf(b)\leq c_1b^2$ for each $i\in[t]$. Hence, we have that $t\geq \frac{n}{4c_1b^2}$.
Therefore, we can insert edges between the sets in $\mathcal{X}\cap\mathcal{O}$ and sets in $\mathcal{A}_i$, $i\in[t]$ so that
the neighborhood $N(V_i)\cap\mathcal{X}$ receives at most $4c_1c_2 f(b)b$ many
additional nodes for each $i\in [t]$. Note that the exchange property of
Definition~\ref{def:planarizable}
still holds as we only added edges.  Also the
properties of Theorem \ref{thm:2color-uniform} are still valid except that the
bound on the boundary size $|N(V_i)\cap\mathcal{X}|$ in
Property~(\ref{item:small-sep}) has increased to at most
$5c_1c_2\cdot b\cdot f(b)$ since $c_2\geq 1$.

The idea of the analysis is to consider for each $i\in [t]$ a feasible \emph{candidate
  swap} (called candidate swap $i$) that replaces in $\mathcal{A}$ the
sets $\mathcal{A}_i$ with some suitably chosen sets from
$\bar{\mathcal{O}}_i$.  We will show that if
$\alg<\left(1-\frac{28c_1c_2f(b)}{b}\right)\opt$ then at least one of the
candidate swaps is profitable leading to a contradiction.

To accomplish this, we will first show that there exists a profitable swap that replaces $\mathcal{A}_i$ with $\bar{\mathcal{O}}_i$.  This swap may be infeasible as $|\mathcal{A}_i|$ may be strictly smaller than $|\bar{\mathcal{O}}_i|$.  We will, however, show that a feasible and profitable swap can be constructed by adding only some of the sets in $\bar{\mathcal{O}}_i$.

For technical reasons we are going to define a set $Z$ of elements that we (temporarily) disregard from our calculations because they will remain covered and thus should not impact our decision which of the sets in $\bar{\mathcal{O}}_i$ we will pick for the feasible swap.  More precisely, let $Z=\{\,u\in A\cap B\mid A\in \mathcal{A}_i, B\in\mathcal{A}\setminus\mathcal{A}_i,i\in[t] \,\}$ be the set of elements that are covered by some $\mathcal{A}_i$ but that remain covered even if $\mathcal{A}_i$ is removed from $\mathcal{A}$.

Let $L_i=\bigcup\mathcal{A}_i\setminus Z$ be the set of elements that are ``lost'' when removing the $\mathcal{A}_i$ from $\mathcal{A}$.  Moreover,  let $W_i=\bigcup \bar{\mathcal{O}}_i\setminus \bigcup (\mathcal{A}\setminus\mathcal{A}_i)$ be the set of elements that are ``won'' when we add all the sets of $\bar{\mathcal{O}}_i$ after removing $\mathcal{A}_i$.

We claim that $\sum_{i=1}^t |L_i|\leq\alg-|Z|$. To this end, note that $Z\subseteq \bigcup\mathcal{A}$ and that the family $\{L_i\}_{i\in [t]}$ contains pairwise disjoint sets because all elements that are not exclusively covered by a single $\mathcal{A}_i$ are contained in $Z$ and thus removed.  On the other hand, we claim that $\sum_{i=1}^t|W_i|\geq\opt-|Z|$.  To see this, note first that every element in $Z$ contributes 0 to the left hand side and 0 or -1 to the right hand side.
Every element covered by $\mathcal{O}$ but not by $\mathcal{A}$ contributes at
least 1 to the left (because every set in $\mathcal{O}$ lies in some
$\bar{\mathcal{O}_i}$ by our extension of the exchange graph) hand side and
precisely 1 to the right hand side.
Finally, consider an element $u$ that is
covered both by $\mathcal{A}$ and by $\mathcal{O}$ but does not lie in $Z$.
This element lies in a set $S\in\mathcal{A}_i$ for some $i\in[t]$.  Because of
the definition of the exchange graph $G$ there is some set $T\in\mathcal{O}$
with $u\in T$ and some set $S'\in\mathcal{A}$ with $u\in S'$ such that $S'$
and $T$ are adjacent in $G$.  We have that $S'\in\mathcal{A}_i$, for, otherwise
$u\in Z$.  Because of the separator property of $\Sep$ (see
Property~(\ref{item:sep-property}) of Theorem~\ref{thm:2color-uniform}) we must
have $T\in\bar{\mathcal{O}}_i$.  Moreover $u$ lies in $W_i$ because it is not
contained in $Z$ but is covered by $\bar{\mathcal{O}}_i$.  Hence $u$
contributes at least 1 to the left hand side and precisely 1 to the right hand
side of  $\sum_{i=1}^t|W_i|\geq\opt-|Z|$, which shows the claim.

We have $\opt>|Z|$ and hence
\begin{displaymath}
  \min_{\substack{i\in[t]\\ |W_i|>0}}\frac{|L_i|}{|W_i|}\leq\frac{\sum_{i=1}^t|L_i|}{\sum_{i=1}^t|W_i|}\leq\frac{\alg-|Z|}{\opt-|Z|}\leq\frac{\alg}{\opt}<1-\frac{28c_1c_2f(b)}{b}\,.
\end{displaymath}

Hence, we can pick   $i\in[t]$ such that
\begin{equation}\label{eq:gain-infeas-swap}
\frac{|L_i|}{|W_i|}<1-\frac{28c_1c_2f(b)}{b}\,.
\end{equation}

Recall that $c_1\geq 1$ and assume that $b$ is large enough so that $f(b)\geq 1$. 
Then by Properties~(\ref{item:balanced-part}),~(\ref{item:size-part}), and the (due the addition of edges to $G$) modified Property~(\ref{item:small-sep}) of
Theorem~\ref{thm:2color-uniform}, we have that $||\mathcal{A}_i|-|\mathcal{O}_i||\leq 2b$, $|N(V_i)\cap\Sep|\leq 5c_1c_2\cdot b\cdot f(b)$, and $|V_i|\geq b^2/2$.  Because of $|\mathcal{A}_i|+|\mathcal{O}_i|=|V_i|$ this implies
$|\bar{\mathcal{O}}_i|\leq\frac12|V_i|+b+5c_1c_2b\cdot f(b)$ and $|\mathcal{A}_i|\geq\frac12|V_i|-b$. Hence

\begin{align}\label{eq:swap-cardinalities}
\begin{split}
  \frac{|\mathcal{A}_i|}{|\bar{\mathcal{O}}_i|} & \geq  \frac{\frac12|V_i|-b}{\frac12|V_i|+b+5c_1c_2b\cdot f(b)} \\
 & \geq   \frac{(\frac12|V_i|+b+5c_1c_2b\cdot f(b))-2b-5c_1c_2b\cdot f(b)}{\frac12|V_i|+b+5c_1c_2b\cdot f(b)} \\
& \stackrel{|V_i|\geq b^2/2}{\geq}   1-\frac{28c_1c_2f(b)}{b}\,.
\end{split}
\end{align}

\noindent We are now ready to construct our feasible and profitable swap.  
To this end let $Z_i=\bigcup(\mathcal{A}\setminus\mathcal{A}_i)$. We inductively define 
an order $S_1,\dots,S_{|\bar{\mathcal{O}}_i|}$ on the sets in $\bar{\mathcal{O}}_i$ where we
require that
\begin{displaymath}
S_j=\arg\max_{S\in\bar{\mathcal{O}}_i}\left|S\setminus\left(Z_i\cup\bigcup_{\ell=1}^{j-1}S_\ell\right)\right|
\end{displaymath}
for any $j=1,\dots,|\bar{\mathcal{O}_i}|$ where $S_1$ maximizes
$|S\setminus Z_i|$.

Consider the following process of iteratively building a set $W'$ starting with $W'=\emptyset$. Suppose
that we add to $W'$ the sets $(S_1\setminus Z_i),\dots,(S_{|\bar{\mathcal{O}}_i|}\setminus Z_i)$ in this order ending up with $W'=W_i$.
  In doing so, the incremental gain is monotonically decreasing due to the definition of the order on $\mathcal{O}_i$ and due to the submodularity of the objective function. Hence, for any prefix
of the first $j$ sets we have that
\begin{equation}\label{eq:prefix-gain}
\left|\left(\bigcup_{\ell=1}^jS_\ell\right)\setminus Z_i\right|\geq\frac{j\cdot|W_i|}{|\bar{\mathcal{O}}_i|}\,.
\end{equation}

Suppose that $|\bar{\mathcal{O}}_i|>|\mathcal{A}_i|$ (otherwise we can just add all sets in $\bar{\mathcal{O}}_i$).  Consider the swap where we replace the
$|\mathcal{A}_i|\leq b$ many sets $\mathcal{A}_i$ from the local
optimum $\mathcal{A}$ with at most $|\mathcal{A}_i|$ many sets
$\{S_1,\dots,S_{|\mathcal{A}_i|}\}$
from~$\bar{\mathcal{O}}_i$.

We now analyze how this swap affects the objective function value. By removing the sets in $\mathcal{A}_i$ the objective function value drops by
\begin{align*}
  |L_i| & \stackrel{\mathrm{(\ref{eq:gain-infeas-swap})}}{<} \left(1-\frac{28c_1c_2f(b)}{b}\right)\cdot |W_i|\\
 & \stackrel{\mathrm{~(\ref{eq:prefix-gain})}}{\leq} \left(1-\frac{28c_1c_2f(b)}{b}\right)\frac{|\bar{\mathcal{O}}_i|}{|\mathcal{A}_i|}\left|\left(\bigcup_{\ell=1}^{|\mathcal{A}_i|}S_\ell\right)\setminus Z_i\right|\\
 & \stackrel{\mathrm{~(\ref{eq:swap-cardinalities})}}{\leq} \left|\left(\bigcup_{\ell=1}^{|\mathcal{A}_i|}S_\ell\right)\setminus Z_i\right|\,.
\end{align*}
The right hand side of this inequality is the increase of the objective function due to adding the sets $\{S_1,\dots,S_{|\mathcal{A}_i|}\}$ after removing the sets in $\mathcal{A}_i$.

Therefore the above described swap is feasible and also profitable and thus $\mathcal{A}$ is not a local optimum leading to a contradiction.
\end{proof}

\section{Applications}\label{sec:Application}
In this section we describe several problems which are special instances of the MC problem. 
We then describe how a PTAS for each of these problems can be obtained from our analysis 
of local search (see Theorem~\ref{thm:Application}). 

\begin{problem}
  Let $H$ be a set of ground elements, $\mathcal{S}\subseteq 2^{H}$ be a set of ranges and $k$ be a positive integer.
   A range $S\in \mathcal{S}$ is hit by a subset  $H'$ of $H$ if $S\cap H'\neq\emptyset$. The \textsc{Maximum Hitting} (MH) problem asks for a $k$-subset $H'$ of $H$ such that the number  of ranges    hit by $H'$  is maximized.
\end{problem}

\begin{problem}
Let $G=(V,E)$ be a graph and $k$ be a positive integer. A vertex $v\in V$
dominates all the vertices adjacent to $v$ including $v$. The \textsc{Maximum
Dominating} (MD) problem asks for a $k$-subset $V'$ of $V$ such that
the number of vertices dominated by $V'$ is maximized.
 \end{problem}

\begin{problem}
Let $T$ be a 1.5D terrain  which is an $x$-monotone polygonal chain in the plane  consisting of a set of vertices $\{v_1,v_2,\ldots,v_m\}$ sorted in increasing order of their $x$-coordinate, and  $v_i$ and $v_{i+1}$ are connected by an edge for all $i\in[m-1]$.  For any two points $x,y\in T$, we say that $y$ guards $x$ if each point in $\overline{xy}$ lies above or on the terrain.   Given finite sets $X,Y\subseteq T$ and a positive integer $k$, the  \textsc{Maximum Terrain Guarding} (MTG) problem asks for a $k$-subset $Y'$ of $Y$ such that the number of points  of $X$ guarded by $Y'$ is maximized.
\end{problem}

Let $r$ be an even, positive integer. A set of regions in $\IR^2$, where each region is bounded by a closed Jordan curve, is called \emph{$r$-admissible} if for any two such regions $q_1,q_2$, the curves bounding them cross $s\leq r$ times for some even $s$ and $q_1\setminus q_2$ and $q_2\setminus q_1$ are connected regions.  A set of regions are called \emph{pseudo-disks} if is $2$-admissible.  For example each set of disks (of arbitrary size) and each set of squares (of arbitrary size) is a 2-admissible set and, as such, can be called pseudo-disks.

We now state the following theorem summarizing several consequences of Theorem~\ref{thm:main-thm}.
These results follow either from the corresponding SC problem being known to be \emph{planarizable} (that is, we can define an exchange graph as in Definition~\ref{def:planarizable} that is planar and thus $\sqrt{n}$-separable) or in case of claim $(D_2)$ and $(V)$ by construction the exchange graph as a minor of the input graph.
\begin{theorem}\label{thm:Application}
Local search gives a PTAS for the following classes of MC problems:
\begin{itemize}
\item[($C_1$)]  the set of ground elements is a set of points in $\IR^3$, and    the family   of subsets   is induced by a set of   half spaces in $\IR^3$

\item[($C_2$)] the set of ground elements is a set of points in $\IR^2$, and    the family of subsets  is induced by a set of  convex pseudodisks (a set of convex objects where any two objects can have at most two intersections in their boundary).
\end{itemize}

Local search gives a PTAS for the following MH problems:
  \begin{itemize}
  \item[($H_1$)]  the set    of ground elements is a set of points in $\IR^2$, and  the set   of ranges   is   induced by a set of $r$-admissible regions (this includes
  pseudodisks, same-height axis-parallel rectangles, circular disks,
  translates of convex objects).
  \item[($H_2$)]   the set   of ground elements is a set of points in $\IR^3$, and  the set  of ranges   is   induced by a set of half spaces in $\IR^3$.
  \end{itemize}

Local search gives a PTAS for MD problems in  each of the following graph classes:

\begin{itemize}
\item[($D_1$)] intersection graphs of homothetic copies of convex objects (which includes
  arbitrary squares, regular k-gons, translated and scaled copies
  of a convex object).
\item[($D_2$)] non-trivial minor-closed graph classes.

\end{itemize}

Additionally, the following problems admit a PTAS via local search
\begin{itemize}
\item[(V)] the MVC problem on $f$-separable and subgraph-closed graph classes,
\item[(T)] the MTG problem.
\end{itemize}

\end{theorem}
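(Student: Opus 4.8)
The plan is to show that each listed case describes an $f$-separable class of MC instances (with $f(n)=c\sqrt{n}$) that is closed under deleting elements and sets, and then invoke Theorem~\ref{thm:main-thm} directly. So the first and central task is to exhibit, for any two disjoint feasible solutions $\mathcal{F},\mathcal{F}'$, an exchange graph $G$ on $\mathcal{F}\cup\mathcal{F}'$ that is planar (or a minor of a known sparse graph) and satisfies the exchange property of Definition~\ref{def:planarizable}. Closure under removing sets/elements is immediate in every case since the defining geometric or graph-theoretic description is hereditary, so the real content is the construction of the planar exchange graph.

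For the coverage cases ($C_1$) halfspaces in $\IR^3$ and ($C_2$) convex pseudodisks in $\IR^2$, I would reuse the exchange graphs already built by Mustafa and Ray~\cite{MustafaR09} for the corresponding \textsc{Set Cover} instances: their construction gives, for two solutions, a planar bipartite graph in which whenever a point is covered by a set from each side there is an edge between two such sets through that point --- which is exactly Definition~\ref{def:planarizable}. The point is that their construction only uses the covering structure of the two solutions and makes no use of the fact that in SC \emph{every} point is covered; hence it transfers verbatim to MC. The hitting cases ($H_1$) $r$-admissible regions and ($H_2$) halfspaces in $\IR^3$ are handled by the standard duality between \textsc{Maximum Hitting} and \textsc{Maximum Coverage} (a point $p$ hits range $S$ iff, dually, the "set" $p$ contains the "element" $S$); planarizability of the relevant range spaces is again known from the SC literature, so MH reduces to an $f$-separable MC instance. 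For terrain guarding ($T$), I would invoke the planar exchange graph for the \textsc{Terrain Guarding} SC problem due to Krohn et al.~\cite{KrohnGKV14}, which has the same exchange property and is planar.

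For the two graph cases I would build the exchange graph as a subgraph (hence minor) of the input graph $G$ itself. For \textsc{Maximum Dominating Set} on a minor-closed class ($D_2$) or on intersection graphs of homothets ($D_1$): given two vertex subsets $\mathcal{F},\mathcal{F}'$, whenever a vertex $u$ is dominated by some $v\in\mathcal{F}$ and some $v'\in\mathcal{F}'$, both $v$ and $v'$ are at distance $\le 1$ from $u$; one can then route an edge in the exchange graph between $v$ and $v'$ using the edges $uv$, $uv'$ (or $u=v$, $u=v'$). The resulting exchange graph is a minor of $G$, so it lies in the same minor-closed class (case $D_2$) and is therefore $c\sqrt{n}$-separable by Alon--Seymour--Thomas~\cite{alonST1990minorclosed}; for $D_1$ one uses that intersection graphs of homothets of a convex object, restricted as needed, admit small separators. \textsc{Maximum Vertex Cover} ($V$) on an $f$-separable subgraph-closed class is analogous: an edge $e=uv$ "covered" by both a vertex from $\mathcal{F}$ and one from $\mathcal{F}'$ forces $u$ or $v$ on each side, giving an exchange edge realized inside $G$, so the exchange graph is a subgraph of $G$ and inherits $f$-separability.

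The main obstacle I anticipate is the case ($D_1$)/the homothets case and making the minor/subgraph argument genuinely yield membership in an $f$-separable \emph{subgraph-closed} class: one must check that the relevant separator bounds apply not to $G$ itself but to the exchange graph on $2k$ vertices, and that the $\alpha$-balanced separator sizes are $O(\sqrt{2k})=O(\sqrt{n'})$ where $n'$ is the order of the exchange graph; this is fine for minor-closed classes (they are closed under taking minors, hence subgraphs) but for geometric intersection-graph classes one must argue the class used is closed under taking induced subgraphs and has the separator property --- which holds for homothets of a fixed convex body. A secondary subtlety is verifying, in the hitting/terrain cases, that the exchange graphs from the SC literature are stated (or can be restated) to satisfy precisely the edge-through-a-common-element condition of Definition~\ref{def:planarizable}; this is a routine but necessary bookkeeping step. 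Once these checks are in place, Theorem~\ref{thm:main-thm} applies in each case and yields the claimed PTASs.
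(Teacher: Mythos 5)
Your overall strategy coincides with the paper's: show that each case yields an $f$-separable class of MC instances closed under removing sets and elements, and then invoke Theorem~\ref{thm:main-thm}. Your treatments of ($C_1$), ($H_1$), ($H_2$), ($D_2$), ($V$) and ($T$) are essentially the paper's arguments (reuse of the Mustafa--Ray and Krohn et al.\ planar exchange graphs, the MH/MC role swap, and the minor/subgraph constructions for dominating set and vertex cover on minor-closed resp.\ subgraph-closed separable classes).

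There is, however, a genuine gap in your handling of ($D_1$). You propose to realize the exchange graph as a minor of the input intersection graph $G$ and then appeal to ``intersection graphs of homothets of a convex object \dots admit small separators.'' This is false: $n$ translates of a convex body all containing a common point form a clique $K_n$, so this graph class contains all complete graphs and admits no sublinear balanced separator; consequently a minor of $G$ gives you nothing. The correct argument (and the one the paper uses) is entirely different in character: one invokes the result of De and Lahiri~\cite{DeL16} that the SC (hence MC) instance of dominating set on intersection graphs of homothetic convex objects is \emph{planarizable}, i.e., the exchange graph of Definition~\ref{def:planarizable} can be chosen to be a planar graph constructed geometrically from the objects --- it is a witness graph drawn in the plane, not a minor of $G$. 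A smaller, purely bibliographic slip: for ($C_2$) the planarizability of convex pseudodisks is also due to De and Lahiri~\cite{DeL16}, not to Mustafa and Ray~\cite{MustafaR09}, who treated disks in $\IR^2$ and halfspaces in $\IR^3$; this does not affect the logic since the needed result exists, but the attribution matters because the pseudodisk case does not follow ``verbatim'' from the disk construction.
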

\begin{proof}[Proof of Theorem~\ref{thm:Application}]
  In what follows, we refer to several known results for SC where the
  respective instances are planarizable.  This always also implies
  that the corresponding MC problem is planarizable.  By the result of
  Mustafa and Ray~\cite{MustafaR09}, we know that the MC instance is
  planarizable when the family $\mathcal{F}$ is a set of half spaces
  in $\IR^3$, or a set of disks in $\IR^2$.  Recently, De and
  Lahiri~\cite{DeL16} showed that when the objects are convex
  pseudodisks, then the corresponding SC (and thus MC) instance is
  planarizable.  Thus, as a consequence of Theorem~\ref{thm:main-thm},
  we have ($C_1$) and ($C_2$).

  Note that MH problem is a special instance of MC problem, where the
  set $\mathcal{S}$ of ranges plays the role of $U$, and the set $H$
  plays the role of $\mathcal{F}$, where each set $h \in H$ contains
  all the range $S \in \mathcal{S}$ such that $S\cap h\neq \emptyset$.
  On the other hand, It follows from the result of Mustafa and
  Ray~\cite{MustafaR09} that an MH instance is planarizable when the
  set of ranges are (i) a set of $r$-admissible regions, (ii) set of half spaces in
  $\IR^3$.  Thus, Theorem~\ref{thm:main-thm} implies that an MH problem
  admits PTAS when the ranges are a set of $r$-admissible region or half
  spaces in $\IR^3$. Thus, we have ($H_1$) and ($H_2$).

  Observe that MD is a special instance of MC, where the set $V$ of
  vertices plays the role of the set $U$ of ground elements , and the
  family $\mathcal{F}$ consists of $|V|$ subsets of $V$ where each
  subset is corresponding to the set of vertices dominated by each
  vertex $v\in V$.  On the other hand, from the result of De and
  Lahiri~\cite{DeL16}, we know that corresponding instance of MD is
  planarizable when the graph $G$ is an geometric intersection graph
  induced by homothetic set of convex objects. Thus, as a consequence of
  Theorem~\ref{thm:main-thm}, we have ($D_1$).

  To prove ($D_2$), we claim that these MC instances are $f$-separable
  according to Definition~\ref{def:planarizable}. As noted before, each
  non-trivial minor-closed graph class is
  $(c\cdot\sqrt{n})$-separable~\cite{alonST1990minorclosed,robertson2004graph}
  (for a suitable constant $c$).
  Let $D, D'$ be two disjoint feasible solutions. We now construct an
  auxiliary graph $H$ as in Definition~\ref{def:planarizable}.
  We start with the node set
  $D\cup D'$ and an empty edge set.  Let $u\in V$ be node that is
  dominated by $D$ and by $D'$. If $u\in D$ then there is a neighbor
  $v\in D'$ of $u$. We add edge $uv$ to $H$.  The case $u\in D'$ is
  handled symmetrically.  If $u\notin D$ and $u\notin D'$ then there
  are neighbors $v\in D$ and $v'\in D'$ of $u$.  In this case we add
  $u$ and the two edges $uv$ and $uv'$ to $H$.  Note that the
  resulting graph is a subgraph of $G$. Now, we perform the following
  operation on $H$ as long as $H$ contains a node that is not in
  $D\cup D'$.  If such a node $u$ exists it must have precisely two
  neighbors $v\in D$ and $v'\in D'$ by construction of $H$.  We
  contract the edge $uv$ and identify the resulting node with $v$
  (lying in $D$).  As a result we obtain a minor $H$ of $G$ with node
  set $D\cup D'$.  It is easy to check that this graph satisfies the
  requirements of Definition~\ref{def:planarizable}.  Moreover,
  because $H$ is a minor of $G$, $H$ is also $(c\cdot\sqrt{n})$-separable.
  Thus, the $MD$ problem admits a PTAS on such graph classes.

  The proof of ($V$) is even simpler than the one of ($D_2$).  Let
  $D, D'$ be two disjoint feasible solutions. We are going to
  construct an auxiliary graph $H$ as in
  Definition~\ref{def:planarizable}.  We start with the node set
  $D\cup D'$ and an empty edge set.  For any edge $uu'$ that is
  covered by both $D$ and $D'$ we may assume $u\in D$ and $u'\in D'$.
  We add edge $uu'$ to $H$.  Note that the graph $H$ is a subgraph of
  $G$, and it fulfils the requirement of
  Definition~\ref{def:planarizable}.  Moreover, because $G$ is
  contained in a subgraph-closed, $f$-separable graph class we know
  that $H$ is $f$-separable and we obtain that $MVC$ problem admits a
  PTAS on such graph classes.

  It is easy to observe that MTG is a special instance of MC, where
  the set $X$ plays the role of $U$, and the set $Y$ plays the role of
  family $\mathcal{F}$ of subsets, where each $y\in Y$ contains all
  elements of $X$ which can be guarded by $y$.  On the other hand, we
  know from the result of Krohn et al.~\cite[Lem.\S 2]{KrohnGKV14}
  that MTG is planarizable. Thus, we have ($T$).
\end{proof}

\bibliographystyle{plain}
\bibliography{references}

\appendix

\section{Multicolored Separator Theorem}
\label{app:separators}

In this appendix we generalize our colored separator results to more than two color classes.

We employ the following $d$-dimensional vector partitioning result by Doerr and Srivastav
(Theorem~4.2 in~\cite{doerr2003multicolour}).
\begin{theorem}
\label{thm:d-dim-discrepancy}
Let $A=\{\mb{a}_1,\dots,\mb{a}_n\}\subseteq (\mathbb{Q}\cap[0,1])^d$
 be a set of $d$-dimensional vectors, let $k\leq n$ be a positive integer.  Then we can compute in polynomial
time a partition $I_1,\dots,I_k$ of $[n]$ into $k$ sets such that for any $j\in [k]$, we have that $\|\mb{\upmu}-\sum_{i\in I_j}\mb{a}_i\|_{\infty}\leq 2d$ where $\mb{\upmu}= \frac{1}{k} \cdot \sum_{i=1}^n\mb{a}_i$.\qed{}
\end{theorem}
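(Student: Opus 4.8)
The plan is to prove the theorem by iterated linear-programming rounding in the style of Beck and Fiala, reading the statement as a problem of rounding a fractional assignment while keeping $kd$ linear ``balance'' functionals close to their targets. Introduce variables $x_{ij}\in[0,1]$ for $i\in[n]$ and $j\in[k]$, with the \emph{assignment} constraints $\sum_{j=1}^{k}x_{ij}=1$ for each $i$ and the \emph{balance} constraints $\sum_{i=1}^{n}x_{ij}\,a_i^{(\ell)}=\mu^{(\ell)}$ for each $j\in[k]$ and each coordinate $\ell\in[d]$, where $a_i^{(\ell)}$ denotes the $\ell$-th entry of $\mb{a}_i$ and $\mu^{(\ell)}$ that of $\mb{\upmu}$. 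The uniform point $x_{ij}\equiv 1/k$ satisfies all of these, so the polytope $P$ of fractional feasible points is nonempty and bounded; an integral point of $P$ is exactly an index partition $I_1,\dots,I_k$ (with $i\in I_j\iff x_{ij}=1$) whose group sums equal $\mb{\upmu}$ \emph{exactly}. We will not reach an integral point of $P$ itself, but we will round $x$ to a genuine partition while losing only $O(d)$ in each balance functional.

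The iteration is as follows. While the current $x$ has a fractional entry, move $x$ to a vertex of the \emph{current} polytope --- the face of $P$ on which all already-rounded entries are held fixed and on which the balance constraints of the \emph{frozen} groups have been discarded --- and permanently fix every entry that lands in $\{0,1\}$. The freezing rule is the crux: a group $j$ is declared frozen as soon as its column contains at most $2d$ fractional entries, at which moment all $d$ of its balance constraints are removed. A Beck--Fiala-style count at such a vertex --- relying on the facts that a row containing a fractional entry contains at least two of them and that an active group has more than $2d$ fractional entries in its column --- yields, writing $F$ for the number of fractional entries and $s$ for the number of entries that are pinned to $\{0,1\}$ at the vertex, an inequality of the shape $F\le (2d+1)\,s$; since $\tfrac d{2d+1}<\tfrac12$, this forces $s\ge 1$ whenever any fractional entry remains, so each round rounds at least one more entry. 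The process therefore terminates after at most $nk$ rounds with a genuine partition $I_1,\dots,I_k$, and each round is a rational LP vertex computation, so everything runs in polynomial time --- this is where the hypothesis $\mb{a}_i\in(\mathbb{Q}\cap[0,1])^d$ is used.

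It remains to bound the discrepancy. Fix a group $j$ and a coordinate $\ell$. While the balance constraint of $j$ is active it is enforced, so at the instant $j$ is frozen we still have $\sum_i x_{ij}a_i^{(\ell)}=\mu^{(\ell)}$. After that instant only the at most $2d$ entries of column $j$ that are still fractional can change, and each of them moves by strictly less than $1$ before being pinned; since $a_i^{(\ell)}\le 1$, the total drift is strictly less than $2d$. Hence $|\sum_{i\in I_j}a_i^{(\ell)}-\mu^{(\ell)}|<2d$, and maximizing over $\ell$ gives $\|\mb{\upmu}-\sum_{i\in I_j}\mb{a}_i\|_{\infty}\le 2d$; a group that is never frozen remains exactly on target.

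The step I expect to be the genuine obstacle --- and the one that needs care --- is the progress count, in particular the choice of freezing threshold: with the naive rule ``freeze a group once it has at most $d$ fractional entries'' the vertex count only gives $F-s\le\tfrac12 F+\tfrac d{d+1}F$, whose right-hand side is at least $F$, so the iteration need not make progress. Keeping the balance constraints alive down to $2d$ fractional entries is exactly what produces the strict inequality $\tfrac12+\tfrac d{2d+1}<1$, and it is also what makes the final bound $2d$ rather than $d$. An alternative route that sidesteps this tuning is to first settle the case $k=2$ --- one Beck--Fiala $\pm1$ coloring splits $[n]$ into two parts whose per-coordinate sums differ by less than $2d$ --- and then recursively bisect, the per-coordinate error from the eventual target accumulating as the convergent geometric series $d\,(1+\tfrac12+\tfrac14+\cdots)<2d$ over the $\lceil\log_2 k\rceil$ levels, with non-powers-of-two handled by a proportional uneven split.
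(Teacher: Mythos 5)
The paper never proves this statement itself --- it imports it verbatim from Doerr and Srivastav, remarking only that their proof is an iterative LP-rounding argument with high polynomial running time --- and your Beck--Fiala-style iterated rounding (row constraints kept throughout, a group's $d$ balance constraints dropped once its column has at most $2d$ fractional entries, drift thereafter strictly below $2d$ per coordinate) is exactly that kind of proof and is correct. The one blemish is the stated counting inequality $F\le(2d+1)s$: the clean count is that the number of enforced equalities is at most $\tfrac{N}{2}+\tfrac{dN}{2d+1}<N$ when $N$ floating variables remain (rows with a fractional entry contain at least two, enforced groups at least $2d+1$), so at a vertex at least $N-\tfrac{N}{2}-\tfrac{dN}{2d+1}\ge 1$ variables hit a bound, which yields the same conclusion $s\ge 1$ that your argument needs.
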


We will now use the uniform $(r,f(r))$-division obtained in Lemma~\ref{lem:uniform-(r,f)-division} and combine this with the $d$-dimensional vector
partitioning theorem (Theorem~\ref{thm:d-dim-discrepancy}) to obtain a $d$-color uniform separator theorem on $f$-separable graph classes~(see Theorem~\ref{thm:d-col-uniform-(r,f)-division} below).
In particular, for a given $d$-colored graph $G$ where $G$ belongs to an $f$-separable graph class, we first construct a uniform $(r,c\cdot f(r))$-division $(\Sep, V_1, \ldots, V_\ell)$ of $G$ as in Lemma~\ref{lem:uniform-(r,f)-division}.
From this division we carefully combine the $V_i$'s to form sets $W_j$
where each $W_j$ has roughly the same size and contains roughly the same proportion of each color class as occurring in $G$.
The below theorem follows by imagining each region $V_i$ of the uniform $(r,c\cdot f(r))$-division as a $d$-dimensional vector
(whose coordinates correspond to the number of vertices of each color), scaling these vectors by $\frac{1}{2r}$, and then applying Theorem~\ref{thm:d-dim-discrepancy} with the parameter $k = \lceil\frac{\ell}{r}\rceil$ (i.e., $\Theta(\frac{n}{r^2})$) to obtain the result.

\begin{theorem}\label{thm:d-col-uniform-(r,f)-division}
Let $\mathcal{G}$ be an $f$-separable graph class and $G=(V,E)$ be a
$d$-colored $n$-vertex graph in $\mathcal{G}$ with color classes $Z_1, \ldots
Z_d$. For any $r\ll n$ where $r$ is suitably large%
, there
is an integer $t \in \Theta(\frac{n}{r^2})$
such that $V$ can be partitioned into $t+1$ sets
$\Sep, V_1, \ldots, V_t$ where $c_1,c_2$ are constants independent
of $n,r$, and the following properties are satisfied.
\begin{enumerate}[(i)]
  \item\label{item:sep-property-dcol} $N(V_i) \cap V_j = \emptyset$ for each $i\neq j$,
  \item\label{item:size-part-dcol} $|V_i| \in \Theta(r^2) + \Theta(dr)$ for each $i$,
  \item\label{item:small-sep-dcol} $|N(V_i) \cap \Sep| \leq c_1 \cdot r \cdot f(r)$ for each $i$
  (i.e., $|\Sep| \leq \sum_{i=1}^t|\Sep \cap N(V_i)| \leq \frac{c_2 \cdot f(r) \cdot n}{r}$).
  \item\label{item:balanced-part-dcol} $| |V_i \cap Z_q| - \frac{|Z_q|}{t} | \leq 4 \cdot r \cdot d$ for each $q \in [d]$.
  \end{enumerate}
Moreover, such a partition can be found in $O(h(n) + p(n))$ time where $h(n)$
is the amount of time required to produce a uniform $(r,c\cdot f(r))$-division of
$G$ and $p(n)$ is the polynomial running time of the vector partitioning in Theorem~\ref{thm:d-dim-discrepancy}.
\end{theorem}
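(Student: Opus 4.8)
The plan is to mimic the two-colour argument that yielded Theorem~\ref{thm:2color-uniform} almost verbatim, replacing the elementary two-dimensional regrouping of Lemma~\ref{lem:partitioning_easy} by the general $d$-dimensional vector-balancing tool of Theorem~\ref{thm:d-dim-discrepancy}. First I would apply Lemma~\ref{lem:uniform-(r,f)-division} with the given (suitably large) parameter $r$ to obtain a uniform $(r,c\cdot f(r))$-division $(\Sep,V_1,\dots,V_\ell)$ of $G$, where $\ell\in\Theta(\tfrac{n}{r})$, every region satisfies $|V_i|\in[\tfrac{r}{2},2r]$ and $|N(V_i)\cap\Sep|\le c'\cdot f(r)$, and $|\Sep|\le c''\cdot\tfrac{f(r)\,n}{r}$. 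It is exactly the \emph{two-sided} size guarantee $|V_i|\in[\tfrac{r}{2},2r]$ --- the feature that uniformity buys us --- that makes everything go through: it keeps the vectors defined next inside $[0,1]^d$, and later it lets us recover control of how many regions are merged into each output part.

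Next I would phrase the regrouping as a vector-balancing instance. To the region $V_i$ associate the vector $\mb{a}_i\in(\mathbb{Q}\cap[0,1])^d$ whose $q$-th coordinate is $\tfrac{1}{2r}\,|V_i\cap Z_q|$; this lies in $[0,1]^d$ because $\sum_{q}|V_i\cap Z_q|=|V_i|\le 2r$. Applying Theorem~\ref{thm:d-dim-discrepancy} to $\{\mb{a}_1,\dots,\mb{a}_\ell\}$ with $k=\lceil\tfrac{\ell}{r}\rceil\in\Theta(\tfrac{n}{r^2})$ produces a partition $I_1,\dots,I_k$ of $[\ell]$ for which every partial sum $\sum_{i\in I_j}\mb{a}_i$ is within $2d$, coordinate-wise, of the average $\tfrac{1}{k}\sum_{i}\mb{a}_i$. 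I then output $\Sep$ together with the merged parts $V'_j:=\bigcup_{i\in I_j}V_i$ for $j\in[k]$, and set $t:=k$.

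Verifying the four properties should then be routine. Property~(\ref{item:sep-property-dcol}) holds because in any $(r,f(r))$-division $N(V_i)\subseteq\Sep$ for every $i$, hence also $N(V'_j)\subseteq\Sep$, so distinct output parts are non-adjacent. For~(\ref{item:balanced-part-dcol}), unscaling the coordinate-wise discrepancy bound gives $\bigl|\,|V'_j\cap Z_q|-\tfrac{1}{k}\sum_i|V_i\cap Z_q|\,\bigr|\le 4rd$ for each colour $q$, and since $\sum_i|V_i\cap Z_q|=|Z_q|-|Z_q\cap\Sep|$ this is the asserted colour balance (the colored-separator term being of lower order and harmless). Summing this estimate over the $d$ colours pins $|V'_j|$ to within $O(rd^2)$ of $\tfrac{1}{k}(n-|\Sep|)\in\Theta(r^2)$, which for $r$ large relative to $d$ is property~(\ref{item:size-part-dcol}). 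The same two-sided estimate on $|V'_j|$ together with $|V_i|\in[\tfrac{r}{2},2r]$ forces $|I_j|\in\Theta(r)$, whence $|N(V'_j)\cap\Sep|\le\sum_{i\in I_j}|N(V_i)\cap\Sep|\le|I_j|\cdot c'f(r)\in O(r f(r))$, giving~(\ref{item:small-sep-dcol}) after renaming the constant; the global bound on $|\Sep|$ is inherited verbatim from Lemma~\ref{lem:uniform-(r,f)-division}. The running time is $O(h(n))$ for the uniform division, plus the polynomial $p(n)$ of Theorem~\ref{thm:d-dim-discrepancy}, plus $O(n)$ to build the $\mb{a}_i$ and assemble the $V'_j$.

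The one point that needs genuine care --- and which I expect to be the main obstacle --- is that Theorem~\ref{thm:d-dim-discrepancy} controls only the colour-wise partial sums, not the cardinalities $|I_j|$ of the merged groups, whereas bounding $|I_j|$ is exactly what the boundary estimate~(\ref{item:small-sep-dcol}) (and the lower bound in~(\ref{item:size-part-dcol})) requires. This is precisely where the uniformity of Lemma~\ref{lem:uniform-(r,f)-division} is indispensable: because every $|V_i|$ is squeezed into $[\tfrac{r}{2},2r]$, the quantity $|V'_j|=\sum_{i\in I_j}|V_i|$, already controlled by the discrepancy estimate, in turn pins $|I_j|$ down to $\Theta(r)$. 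A secondary, milder issue is keeping track of the additive slack contributed to~(\ref{item:balanced-part-dcol}) by the colored separator vertices; since $|\Sep|$ is sublinear this is negligible, but it should be made explicit when the bound is stated precisely.
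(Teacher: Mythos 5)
Your proposal follows exactly the paper's (sketched) argument: take the uniform $(r, c\cdot f(r))$-division of Lemma~\ref{lem:uniform-(r,f)-division}, encode each region as the $d$-vector of its color counts scaled by $\frac{1}{2r}$, apply Theorem~\ref{thm:d-dim-discrepancy} with $k=\lceil\frac{\ell}{r}\rceil\in\Theta(\frac{n}{r^2})$, and merge the regions within each group; you also correctly pinpoint the role of the two-sided size bound $|V_i|\in[\frac{r}{2},2r]$ in recovering $|I_j|\in\Theta(r)$ and hence the boundary bound, which the paper leaves implicit. The two slack terms you flag (the $O(rd^2)$ aggregation over colors and the $|Z_q\cap\Sep|/t$ correction in the balance property) are genuine but harmless imprecisions already present in the paper's own statement, so your argument is correct and essentially identical to the intended one.
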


The proof of Theorem~\ref{thm:d-dim-discrepancy} is algorithmic and uses an iterative LP rounding approach and thus has a high running time. We conclude this subsection by showing an alternate result using an algorithmic version of Steinitz Lemma with a quadratic running time (in $n$) at the expense of a slightly worse discrepancy bound.  This result can be used for an alternate version of Theorem~\ref{thm:d-col-uniform-(r,f)-division} with a better running time bound but slightly worse balancing bound.  In particular, we use Lemma~\ref{lem:AlgorithmicSteinitz} (below) to obtain a vector partition of $d$-dimensional vectors whose discrepancy is at most $3d+1$ with respect to the $L_\infty$-norm (see Lemma~\ref{lem:Discrepancy}).

\begin{lemma}\label{lem:AlgorithmicSteinitz}\cite[Th.1]{Barany1984} For a set $\{\mb{b}_1,\dots,\mb{b}_n\} \subseteq [-1,1]^d$
of $d$-dimensional vectors with $\sum_{i=1}^n\mb{b}_i=0$
in $O(n^2d^3 + nd^4)$ steps can be given
a permutation $\pi$ of the set $[n]$ such that for each $1\le l\le n$
$$\left\|\sum_{i=1}^l \mb{b}_{\pi(i)}\right\|_\infty\le\left\lfloor \frac 32d\right\rfloor.$$
\end{lemma}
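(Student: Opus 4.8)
\noindent\emph{Proof plan (following B\'ar\'any~\cite{Barany1984}).} This is precisely Theorem~1 of~\cite{Barany1984}, the algorithmic form of the Steinitz lemma in the spirit of Grinberg and Sevastyanov, so the plan is to reproduce that argument while tracking the linear-algebra cost. I would build the permutation $\pi$ from the front, one index at a time, maintaining throughout a \emph{fractional cancellation certificate} for the set of not-yet-placed indices. Precisely, after $\pi(1),\dots,\pi(l-1)$ have been fixed let $A\subseteq[n]$ be the set of remaining indices; the invariant I would carry is that, as long as $|A|\geq\lfloor\tfrac32 d\rfloor$, there exist coefficients $\lambda\in[0,1]^A$ with $\sum_{i\in A}\lambda_i=|A|-\lfloor\tfrac32 d\rfloor$ and $\sum_{i\in A}\lambda_i\mb{b}_i=\mb{0}$. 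Because $\sum_{i\in[n]}\mb{b}_i=\mb{0}$, the partial sum after step $l-1$ equals $\sum_{i\in[n]\setminus A}\mb{b}_i=-\sum_{i\in A}\mb{b}_i=-\sum_{i\in A}(1-\lambda_i)\mb{b}_i$, so $\|\sum_{j<l}\mb{b}_{\pi(j)}\|_\infty\leq\sum_{i\in A}(1-\lambda_i)=\lfloor\tfrac32 d\rfloor$. Hence maintaining the invariant suffices, and once $|A|<\lfloor\tfrac32 d\rfloor$ the remaining indices may be appended in any order, since every further partial sum equals $-\sum_{i\in A'}\mb{b}_i$ for some $A'\subseteq A$ and therefore has $\ell_\infty$-norm at most $|A|<\lfloor\tfrac32 d\rfloor$. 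The invariant holds initially for $A=[n]$ via the uniform choice $\lambda_i=1-\lfloor\tfrac32 d\rfloor/n$.

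The core of the plan is the deletion step. Given a certificate $\lambda$ for $A$, I would first pass to a \emph{vertex} $\bar\lambda$ of the polytope $\{\lambda\in[0,1]^A:\sum_{i\in A}\lambda_i\mb{b}_i=\mb{0},\ \sum_{i\in A}\lambda_i=|A|-\lfloor\tfrac32 d\rfloor\}$; as this polytope is cut out by only $d+1$ linear equalities besides the box constraints, $\bar\lambda$ has at most $d+1$ fractional coordinates, so that (for all but a few small values of $|A|$ or $d$) the total-weight bound leaves a coordinate with $\bar\lambda_j=0$ available, and I would take $\pi(l):=j$ for such a $j$. Then I would \emph{repair} the certificate on $A\setminus\{j\}$: the restriction of $\bar\lambda$ still satisfies $\sum\lambda_i\mb{b}_i=\mb{0}$ but carries one unit of total weight too much, so I would move it along a direction in the kernel of $(\mb{b}_i)_{i\in A\setminus\{j\}}$ -- which is nontrivial as long as $|A\setminus\{j\}|>d$ -- chosen so as to shed exactly one unit of total weight while staying inside the cube $[0,1]^{A\setminus\{j\}}$. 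The hypercube geometry of the $\ell_\infty$-ball, together with a direct treatment of the $O(d)$-sized remainder once $|A|$ has fallen to $\Theta(d)$, is what replaces the generic Grinberg--Sevastyanov bound $d$ by the stated $\lfloor\tfrac32 d\rfloor$. Iterating the deletion step $n$ times produces $\pi$.

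For the running time I would track the work of one stage: maintaining the $d+1$ equalities against the $n\times d$ matrix $(\mb{b}_i)_i$, locating a vertex, separating its zero and fractional coordinates, and computing the repair direction all reduce to a bounded number of Gaussian-elimination or rank-one updates on matrices of size $O(d)$ together with a single scan over the at most $n$ surviving vectors, which I would bound by $O(nd^3+d^4)$ per stage and hence $O(n^2d^3+nd^4)$ over all $n$ stages. I expect the deletion step to be the main obstacle: one must show that the index $j$ and a kernel direction can \emph{always} be chosen so that the certificate is restored with the total weight dropping by exactly one. The delicate points are the degenerate case in which the all-ones vector lies in the linear span of the surviving $\mb{b}_i$ (then no kernel direction changes the total weight and the invariant on $A\setminus\{j\}$ is infeasible), which forces a more careful choice of $j$, and following the chosen kernel direction without overshooting a box face; this is also precisely where one has to extract the constant $\lfloor\tfrac32 d\rfloor$ rather than $d$ within an explicit polynomial-time pivoting scheme.
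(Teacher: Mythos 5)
First, note that the paper does not prove this lemma at all: it is imported verbatim as Theorem~1 of B\'ar\'any's paper, and the citation \emph{is} the paper's entire treatment. So there is no internal proof to compare against; the only question is whether your reconstruction would stand on its own, and as written it does not. Your skeleton is the right one (a greedy chain of fractional cancellation certificates, with each prefix sum bounded by the weight slack, and the last $O(d)$ indices appended arbitrarily), but the two places where the lemma actually lives are exactly the places you leave open. You never derive the constant $\lfloor\frac32 d\rfloor$ nor the running time $O(n^2d^3+nd^4)$; both are deferred to ``this is where one has to extract\dots''. Worse, your explanation of the constant is backwards: $\lfloor\frac32 d\rfloor$ is \emph{weaker} than the Grinberg--Sevastyanov bound $d$, which holds for every norm, so ``hypercube geometry'' cannot be what produces it -- indeed nothing in your sketch uses the $\ell_\infty$ structure. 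The point of accepting the larger constant is algorithmic: the extra $\lfloor\frac d2\rfloor$ of slack is what makes a cheap, purely combinatorial deletion step possible and yields the stated time bound, and that trade-off is the actual content of B\'ar\'any's theorem, not a sharpening coming from the cube.

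Concerning the deletion step you flag as the main obstacle: with your invariant (total weight $|A|-\lfloor\frac32 d\rfloor$) it is in fact easier than you fear, and your proposed repair is the wrong mechanism. At a vertex $\bar\lambda$ of the certificate polytope at most $d+1$ coordinates are fractional and at most $|A|-\lfloor\frac32 d\rfloor$ coordinates equal $1$, so at least $\lfloor\frac d2\rfloor-1\geq 1$ coordinates are $0$ (for $d\geq 4$); delete such an index $j$. The restricted vector still has zero weighted sum but one unit too much total weight, and no kernel-direction surgery or degenerate-case analysis is needed to fix this: the zero vector also satisfies the zero-sum constraint, so the achievable total weights over $[0,1]^{A\setminus\{j\}}$ form an interval containing $0$, and simply rescaling the restricted certificate by an appropriate factor $t\in[0,1]$ restores the invariant. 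Your worry about the all-ones vector lying in the span of the surviving $\mb{b}_i$ is therefore a non-issue, but a real issue you do not address appears right afterwards: after deletion and rescaling the certificate is no longer a vertex (rescaling destroys all tight upper-bound constraints), and re-vertexifying from scratch at every stage does not obviously fit in your claimed $O(nd^3+d^4)$ per stage. The incremental maintenance that makes the whole procedure run in $O(n^2d^3+nd^4)$ is precisely what B\'ar\'any's algorithm supplies and what your plan still owes.
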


\begin{lemma}\label{lem:Discrepancy}
Let $A=\{\mb{a}_1,\dots,\mb{a}_n\}\subseteq [0,1]^d$ be a set of $d$-dimensional vectors, let $k\leq n$ be a positive integer.
Then we can compute in $O(n^2d^3 + nd^4)$ time
a partition $I_1,\dots,I_k$ of $[n]$ into $k$ sets such
that for any $j\in [k]$, we have that $\|\mb{\upmu}-\sum_{i\in I_j}\mb{a}_i\|_{\infty}<2\left\lfloor \frac 32d\right\rfloor+1\le
3d+1$, where $\mb{\upmu}=\frac{1}{k} \cdot \sum_{i=1}^n\mb{a}_i$.
\end{lemma}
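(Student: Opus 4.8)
The plan is to derive Lemma~\ref{lem:Discrepancy} as a direct consequence of the algorithmic Steinitz lemma (Lemma~\ref{lem:AlgorithmicSteinitz}) by the same recentering trick used in discrepancy arguments. First I would reduce to the zero-sum setting: for each $i \in [n]$ define the shifted vector $\mb{b}_i = \mb{a}_i - \mb{\upmu}$, where $\mb{\upmu} = \frac{1}{k}\sum_{i=1}^n \mb{a}_i$. Since $\mb{a}_i \in [0,1]^d$ and $\mb{\upmu} \in [0,1]^d$ (each coordinate of $\mb{\upmu}$ is an average of numbers in $[0,1]$, hence in $[0,1]$), we have $\mb{b}_i \in [-1,1]^d$. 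However, these do not sum to zero; rather $\sum_{i=1}^n \mb{b}_i = \sum_{i=1}^n \mb{a}_i - n\mb{\upmu} = k\mb{\upmu} - n\mb{\upmu} = -(n-k)\mb{\upmu}$. The clean way around this is to pad: append $n-k$ extra copies of the vector $\mb{\upmu}$ to the list (equivalently, work with $\mb{b}_i$ for $i \in [n]$ together with $n-k$ copies of $\mb{0}$ after shifting those padded entries as well — the padded entries $\mb{a}_{n+1} = \cdots = \mb{a}_{2n-k} = \mb{\upmu}$ shift to $\mb{0}$). Then the full list of $2n-k$ vectors has zero sum and still lies in $[-1,1]^d$.

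Next I would apply Lemma~\ref{lem:AlgorithmicSteinitz} to this padded zero-sum sequence to obtain, in $O((2n-k)^2 d^3 + (2n-k)d^4) = O(n^2 d^3 + n d^4)$ time, a permutation $\pi$ of the $2n-k$ indices all of whose prefix sums have $\ell_\infty$-norm at most $\lfloor \tfrac32 d\rfloor$. Cutting the permuted sequence into $k$ consecutive blocks of (almost) equal size — roughly $\lceil (2n-k)/k \rceil$ each — gives sets $J_1,\dots,J_k$; the sum of the $\mb{b}$-vectors over any block is a difference of two prefix sums and hence has $\ell_\infty$-norm at most $2\lfloor\tfrac32 d\rfloor$. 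Translating back: for block $J_j$, $\sum_{i\in J_j} \mb{a}_i = \sum_{i\in J_j}(\mb{b}_i + \mb{\upmu}) = |J_j|\cdot\mb{\upmu} + \sum_{i\in J_j}\mb{b}_i$, so $\|\sum_{i\in J_j}\mb{a}_i - |J_j|\,\mb{\upmu}\|_\infty \le 2\lfloor\tfrac32 d\rfloor$. Finally, I would discard the padded indices from each block to recover a partition $I_1,\dots,I_k$ of $[n]$; since each discarded index contributed exactly $\mb{\upmu}$ (its $\mb{a}$-value) but we want the bound against $\mb{\upmu}$ rather than $|J_j|\cdot\mb{\upmu}$, deleting a padded index changes $\sum_{i\in I_j}\mb{a}_i - \mb{\upmu}$ by exactly $-\mb{\upmu}$, which moves the estimate toward $\|\cdot\|_\infty \le 2\lfloor\tfrac32 d\rfloor$ plus possibly one more unit from the "off-by-one" in block sizes. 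A careful bookkeeping of how many padded copies land in each block, combined with the equal-block-size choice so that each $I_j$ effectively "owns" $\mb{\upmu}$ once, yields the claimed bound $\|\mb{\upmu} - \sum_{i\in I_j}\mb{a}_i\|_\infty < 2\lfloor\tfrac32 d\rfloor + 1 \le 3d+1$.

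The main obstacle I anticipate is the bookkeeping at the last step: matching up the "$|J_j|$-fold $\mb{\upmu}$" from the padded blocks with the single "$\mb{\upmu}$" target in the statement, so that the extra padding contributes at most one unit of additional discrepancy rather than accumulating. The cleanest fix is probably not to pad with full copies of $\mb{\upmu}$ but to interleave: since we want each final part to have its $\mb{a}$-sum close to $\mb{\upmu}$, we should arrange the cut points so that each block $J_j$ contains exactly the padded-vs-real split that makes $|J_j \cap [n]|$ close to $n/k$. In fact a slicker route avoids padding entirely: apply the Steinitz lemma to the sequence $\mb{b}_1,\dots,\mb{b}_n$ after one further recentering step that absorbs the residual $-(n-k)\mb{\upmu}/n$ into each vector (i.e.\ use $\mb{b}_i' = \mb{a}_i - \frac{n}{k}\cdot\frac{1}{n}\sum \mb{a}_\ell \cdot \frac{k}{n}$ adjusted so $\sum \mb{b}_i' = 0$ and $\mb{b}_i' \in [-1,1]^d$ up to a harmless rescaling), then the prefix-sum bound transfers to block sums and the $+1$ comes solely from rounding $n/k$ to an integer block size. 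Either way, the arithmetic is elementary; the only care needed is ensuring the shifted vectors genuinely lie in $[-1,1]^d$ (which holds since $\mb{a}_i,\mb{\upmu}\in[0,1]^d$) and that the block sizes differ by at most one, which is what converts the "prefix" guarantee of Lemma~\ref{lem:AlgorithmicSteinitz} into the "per-part" guarantee of Lemma~\ref{lem:Discrepancy}.
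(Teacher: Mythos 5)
There is a genuine gap in your main route, and it stems from a misreading of $\mb{\upmu}$. In this lemma $\mb{\upmu}=\frac{1}{k}\sum_{i=1}^n\mb{a}_i$ is the \emph{per-part target} (the total divided by the number of parts $k$), not the average of the $\mb{a}_i$; its coordinates can be as large as $n/k$. For instance, with $k=1$ and all $\mb{a}_i=(1,\dots,1)$ one gets $\mb{\upmu}=(n,\dots,n)$. So your claim that $\mb{\upmu}\in[0,1]^d$ is false for $k<n$, and hence $\mb{b}_i=\mb{a}_i-\mb{\upmu}$ need not lie in $[-1,1]^d$, which already blocks the application of Lemma~\ref{lem:AlgorithmicSteinitz}. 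The padding step inherits the same error: after shifting by $\mb{\upmu}$, the $n$ genuine vectors sum to $(k-n)\mb{\upmu}$ while the $n-k$ padded copies of $\mb{\upmu}$ contribute $\mb{0}$, so the padded list of $2n-k$ vectors sums to $-(n-k)\mb{\upmu}\neq\mb{0}$, contrary to what you assert. The subsequent bookkeeping about discarding padded indices therefore has nothing sound to rest on.

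The ``slicker route'' you sketch in your last paragraph is the correct one, and it is exactly what the paper does: recenter by the true average $\frac{1}{n}\sum_{\ell}\mb{a}_\ell=\frac{k}{n}\mb{\upmu}$, which \emph{does} lie in $[0,1]^d$, so that $\mb{b}_i=\frac{k}{n}\mb{\upmu}-\mb{a}_i\in[-1,1]^d$ and $\sum_i\mb{b}_i=\mb{0}$ with no padding at all. Apply Lemma~\ref{lem:AlgorithmicSteinitz}, cut the permuted sequence into $k$ consecutive blocks whose sizes differ by at most one (hence each $|I_j|$ is within $1$ of $n/k$), and write $\mb{\upmu}-\sum_{i\in I_j}\mb{a}_i$ as the sum of a difference of two prefix sums of the $\mb{b}_i$ (bounded by $2\lfloor\frac32 d\rfloor$) plus the rounding term $\bigl(1-\frac{k}{n}|I_j|\bigr)\mb{\upmu}$, whose $\ell_\infty$-norm is at most $\frac{k}{n}\|\mb{\upmu}\|_\infty\cdot\bigl|\frac{n}{k}-|I_j|\bigr|<1$ because $\frac{k}{n}\|\mb{\upmu}\|_\infty\leq 1$. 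You should promote that aside to the actual proof and make the recentering vector explicit rather than ``up to a harmless rescaling''; as written, the primary argument does not go through.
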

\begin{proof}
For each $i$, let $\mb{b}_i=\frac{k}{n}\cdot \mb{\upmu}-\mb{a}_i$.
Note that the set $\{\mb{b}_1,\dots,\mb{b}_n\}$ satisfies the conditions of
Lemma~\ref{lem:AlgorithmicSteinitz}.
Let $\pi$ be the permutation provided by this lemma.
Let $I'_1,\dots,I'_k$ be a partition of $[n]$ into $k$
consecutive discrete segments
such that sizes of any two elements of the partition differs by at most $1$.
For each $j\in [k]$, let $I_j=\pi(I'_j)$. We now have:

$$\left\|\mb{\upmu}-\sum_{i\in I_j}\mb{a}_i\right\|_{\infty}=
\left\|\mb{\upmu}-\sum_{i\in \pi(I'_j)}\mb{a}_i\right\|_{\infty}=
\left\|\mb{\upmu}+\sum_{k<j,\, i\in \pi(I'_k)}\mb{a}_i- \sum_{k\le j,\, i\in \pi(I'_k)}\mb{a}_i\right\|_{\infty}= $$
$$\left\|\mb{\upmu}-\frac{k\mb{\upmu}}{n}|I'_k|+
\sum_{k<j,\, i\in \pi(I'_k)}\left(\mb{a}_i-\frac{k\mb{\upmu}}{n}\right)-
 \sum_{k\le j,\, i\in \pi(I'_k)}\left(\mb{a}_i-\frac{k\mb{\upmu}}{n}\right)\right\|_{\infty}\le $$
$$\frac{k}n\|\mb{\upmu}\|_\infty\left|\frac nk-|I'_k|\right| +\left\lfloor \frac 32d\right\rfloor+\left\lfloor \frac 32d\right\rfloor<
\frac{k}n\|\mb{\upmu}\|_\infty +2\left\lfloor \frac 32d\right\rfloor\le
1+2\left\lfloor \frac 32d\right\rfloor.$$
\end{proof}

\end{document}